\documentclass[11pt]{article}

\usepackage{authblk}
\usepackage[margin=1in]{geometry}
\usepackage{amsmath,amssymb,amsthm,mathtools}
\usepackage{booktabs}
\usepackage{graphicx}
\usepackage{microtype}
\usepackage{natbib}
\usepackage{enumitem}
\usepackage{url}
\usepackage{placeins}
\usepackage{algorithm}
\usepackage[hidelinks]{hyperref}

\setlist{nosep}
\graphicspath{{figures/}{./}}

\newtheorem{theorem}{Theorem}
\newtheorem{proposition}[theorem]{Proposition}
\newtheorem{corollary}[theorem]{Corollary}
\newtheorem{lemma}[theorem]{Lemma}
\theoremstyle{remark}
\newtheorem{remark}{Remark}

\newcommand{\E}{\mathbb{E}}
\newcommand{\Var}{\operatorname{Var}}
\newcommand{\Prb}{\mathbb{P}}
\newcommand{\ind}{\mathbf{1}}
\newcommand{\given}{\,\middle|\,}

\title{Adaptive Sampling of Costly Outcomes in Randomized Clinical Trials}

\author[1,*]{Shuoyang Wang}
\author[2,3,*]{Wanyu Zhang}
\author[4]{Lingli Yang}
\author[5,\textdagger]{Kexuan Li}

\affil[1]{Department of Bioinformatics and Biostatistics, University of Louisville, Louisville, KY, USA}
\affil[2]{Division of Gastroenterology, Hepatology and Nutrition, University of Louisville School of Medicine, Louisville, KY, USA}
\affil[3]{Louisville Clinical and Translational Research Center, University of Louisville, Louisville, KY, USA}
\affil[4]{Takeda Pharmaceuticals, Cambridge, MA, USA}
\affil[5]{Data \& Quantitative Sciences, Bristol Myers Squibb, Cambridge, MA 02141, USA}

\date{}

\begin{document}
\maketitle

\renewcommand{\thefootnote}{\fnsymbol{footnote}}
\footnotetext[1]{These authors contributed equally to this work as the co-first authors.}
\footnotetext[2]{Corresponding author: Kexuan Li (\texttt{kexuan.li.77@gmail.com}).}
\renewcommand{\thefootnote}{\arabic{footnote}}

\begin{abstract}
In some randomized trials the primary outcome is costly or slow to measure, while auxiliary variables that predict it are available for everyone. The outcome can then be measured in a probability sample. We study an adaptive design in which the statistician who selects outcomes to measure is blinded to treatment assignment. Outcomes from an initial random sample are used to fit pooled models for the outcome and its residual variance. These set the sampling probabilities for the remaining participants and may be refitted as outcomes accumulate. After unblinding, arm means are estimated by augmented inverse probability weighting. The estimator is unbiased for the complete-data treatment difference for any working models, and a martingale central limit theorem gives Wald intervals under repeated updating. We bound the variance lost by estimating the sampling rule. The bound is linear in the error of the fitted residual variance, quadratic when the optimal probabilities are not truncated, and relates the initial sample size to the learning rate of the models. Relative to designs using treatment assignment, the blinded design loses a term due to unequal residual variances in the two arms and a term due to the conditional treatment effect, which is second order near the null. In simulations, coverage was near nominal. Adaptive sampling was 12\% to 34\% more efficient than simple random sampling and needed 10\% to 26\% fewer measured outcomes for the same precision. In a resampling study of an antifungal trial, adaptive sampling reduced the sampling variance by 39\%.
\end{abstract}

\medskip
\noindent\textbf{Keywords:} adaptive sampling; augmented inverse probability weighting; blinded design; costly outcomes; randomized clinical trials

\section{Introduction}

Some outcomes in clinical trials take much more time or money to obtain than the data collected during routine follow-up. Examples include specialized laboratory assays, central review of imaging, adjudication of clinical events by a committee, review of medical records, and detailed clinical assessments. Central adjudication is standard in many large trials, although its value relative to its cost has been questioned \citep{GrangerEtAl2008}. When variables that predict the costly outcome are available for all randomized participants, the outcome can instead be measured in a probability sample. This reduces the number of measurements and keeps a valid comparison of the randomized arms. In oncology, for example, \citet{DoddEtAl2011} proposed central review of progression for a random sample of patients, combined with the local assessments of all patients through an auxiliary-variable estimator.

The problem is a form of two-phase sampling, which has a long history in survey sampling and semiparametric inference \citep{SarndalSwenssonWretman1992,RobinsRotnitzkyZhao1994,Tsiatis2006,TaoZengLin2017,TaoZengLin2020}. In randomized trials, \citet{GilbertYuRotnitzky2014} derived optimal sampling probabilities when an expensive outcome is measured in a subsample and auxiliary variables are available for everyone. The optimal probabilities depend on the conditional residual variance of the outcome, which is rarely known when a trial starts. Adaptive, or multiwave, two-phase designs therefore estimate it from an initial set of measurements and revise the design as more measurements become available \citep{McIsaacCook2015,ChenLumley2020,SauerHedtGauthierHaneuse2023,ZengEtAl2025}. \citet{ZengEtAl2025} developed adaptive stratified designs for average causal effects in observational studies with costly confounders. \citet{MozerPashleyMiratrix2026} derived optimal stratified allocations for model-assisted estimation of treatment effects in randomized trials whose outcomes are scored by hand in a subsample. Their strata are formed within treatment arms, and the allocation is fixed in advance.

A related literature on prediction-powered inference combines machine-learning predictions with a small number of measured outcomes \citep{AngelopoulosEtAl2023}. Active statistical inference \citep{ZrnicCandes2024} uses the predictions to decide which outcomes to measure, corrects for the selection by inverse probability weighting, and allows the sampling rule to be updated as outcomes arrive, with inference from a martingale central limit theorem. \citet{KlugerBates2026} extended this approach to M-estimation under two-phase multiwave sampling. For sequential mean estimation, \citet{SfyrakiWang2026} found that mixing an uncertainty-based rule with a constant sampling probability gave the narrowest intervals when the weight on the constant component was near one.

In a blinded trial, the randomization code is ordinarily withheld from sponsor staff involved in running the trial \citep{ICHE91999}, so the statistician who manages outcome collection usually does not have it. We therefore study a design in which treatment assignment is used neither to fit the outcome model nor to set the sampling probabilities. The procedure uses pooled data from both arms, and the arms are compared only after unblinding, with the recorded sampling probabilities. Blinded sample size re-estimation works under the same restriction. There the outcome variance is estimated from pooled data and is inflated by a term proportional to the squared treatment effect \citep{GouldShih1992,KieserFriede2003}. A term of the same form appears in our comparison of blinded and unblinded designs.

The analysis uses a prediction for every participant and an inverse probability correction for participants whose outcomes are measured. This is the augmented inverse probability weighted estimator with known probabilities \citep{RobinsRotnitzkyZhao1994}, or the difference estimator of survey sampling \citep{SarndalSwenssonWretman1992}. The prediction model can be simple or flexible and can be refitted as outcomes accumulate. Validity does not require a correct model, because the correction uses the known sampling probabilities. The working models affect precision, and they direct measurements toward participants whose outcomes are hard to predict. The estimator and its martingale limit theory are close to those of \citet{ZrnicCandes2024}, and we restate them in Section~\ref{sec:estimation} for a difference of two arm means with fixed arm totals, with inference conditional on the complete trial data.

Our contributions concern the design. We bound the variance lost by learning the sampling rule while outcomes are being collected. The bound separates the error of the outcome prediction from the error of the residual variance model. It is linear in the latter in general and quadratic when the optimal probabilities are not truncated, and it relates the size of the initial random sample to the rate at which the models are learned. We then compare the blinded design with hypothetical designs that use treatment assignment. The difference splits into a term due to unequal residual variances in the two arms and a term due to the conditional treatment effect. The term due to the treatment effect is small near the null, while the term due to unequal residual variances does not depend on the treatment effect and can remain under the null. We also give a planning result based on classical Neyman allocation \citep{Neyman1934,Cochran1977}. With a known residual variance and no truncation, the optimal design measures a fraction $1/(1+\kappa^2)$ of the outcomes that simple random sampling needs for the same precision, where $\kappa$ is the coefficient of variation of the residual standard deviation.

We take the marginal mean difference as the estimand. The same analysis covers continuous and binary outcomes. The auxiliary variables may include baseline variables and measurements made after randomization but before the costly outcome is selected for measurement. Postrandomization variables are used only to predict the costly outcome and to guide sampling, and comparisons conditional on them are not given a causal interpretation. The data illustration uses one update after the initial random sample, because this design is easy to describe and audit. The theory allows repeated updates, and the simulations include a design with two updates.

The remainder of the paper is organized as follows. Section~\ref{sec:design} describes the design and Section~\ref{sec:estimation} the estimator. Section~\ref{sec:choosing} studies the choice of sampling probabilities, the cost of learning them, the size of the initial sample, and the cost of withholding treatment assignment, and it ends with a summary of the procedure. Sections~\ref{sec:simulation} and \ref{sec:toenail} report simulations and a resampling study of an antifungal trial. Section~\ref{sec:discussion} discusses limitations and extensions. Technical proofs are in Appendix~\ref{app:proofs}.

\section{Trial and sampling design}
\label{sec:design}

For participant $i=1,\ldots,n$, let $A_i\in\{0,1\}$ denote the randomized treatment, $X_i$ the auxiliary variables available before the costly outcome is selected for measurement, and $Y_i$ the costly outcome. Let $Y_i^a$ be the potential outcome under treatment $a$. Participants are independent and identically distributed apart from treatment assignment, which is independent of baseline variables and potential outcomes. The arm totals may be fixed by design. Under consistency and randomization,
\[
\mu_a=\E(Y_i^a)=\E(Y_i\mid A_i=a),
\qquad a\in\{0,1\},
\]
and the estimand is
\[
\Delta=\mu_1-\mu_0.
\]
Write $p_a=\Prb(A_i=a)$. The randomization probabilities are known and bounded away from zero.

Let $R_i=1$ if $Y_i$ is measured and $R_i=0$ otherwise, and let $\mathcal D_n=\{(A_i,X_i,Y_i):i=1,\ldots,n\}$ denote the complete trial data. The design starts with an initial simple random sample $S_0$ of $m$ participants, selected independently of $\mathcal D_n$, whose outcomes are all measured. Let $\mathcal M$ denote the other $n-m$ participants. They are indexed in the order in which their sampling decisions are made, and participants whose decisions are made at the same time are ordered arbitrarily. Let $\mathcal H_{i-1}$ denote the information available to the statistician before the decision for participant $i$. It may contain the auxiliary variables, $S_0$, earlier sampling decisions, and measured outcomes, but not the treatment assignments. The working prediction $f_i$ and the sampling probability $\pi_i$ are functions determined by $\mathcal H_{i-1}$, and we call such sequences predictable. We require
\begin{equation}
\Prb(R_i=1\mid \mathcal D_n,S_0,R_1,\ldots,R_{i-1})
=
\pi_i(X_i),
\qquad
\epsilon\leq\pi_i(X_i)\leq1.
\label{eq:sampling}
\end{equation}
Condition \eqref{eq:sampling} means that, given the information available to the statistician, the sampling draw does not depend on treatment assignments or on unmeasured outcomes, including those of other participants. It holds when the draws use random numbers generated independently of the trial data. Draws made at the same time are independent, so the number of measured outcomes is random with a prespecified expectation, as in Poisson sampling \citep{SarndalSwenssonWretman1992}. An outcome that must be collected for clinical or operational reasons can be given $\pi_i=1$. All results below are conditional on $S_0$, and participants in $S_0$ have $R_i=\pi_i=1$.

The sampling rule can be updated at prespecified times. The outcomes measured in the initial sample are used to fit a pooled outcome model and a model for the residual variance, and these fits set the sampling probabilities for the next group of participants. At a further update, both models are refitted to all outcomes measured so far. The probability applied to each participant is recorded and used in the analysis. Algorithm~\ref{alg:design} in Section~\ref{sec:practice} lists the steps.

As an example, consider a trial of 1000 participants randomized 1:1, with a budget of about 400 costly measurements. The auxiliary variables $X$ might include baseline variables, routine laboratory results, an inexpensive assay, a short-term outcome, or an imaging summary, all available for everyone. First, 150 participants are selected at random and their outcomes are measured. This sample contains participants from both arms, but the statistician who fits the sampling rule does not know who received which treatment. The 150 outcomes are used to estimate
\[
\widehat f_1(x)\approx \E(Y\mid X=x)
\qquad\text{and}\qquad
\widehat r_1(x)\approx
\E\left[\{Y-\widehat f_1(X)\}^2\mid X=x\right].
\]
The remaining 850 participants are then sampled with average probability $250/850\approx0.29$. Participants whose outcomes are hard to predict from $X$ receive higher probabilities. A participant with an uncertain prediction might be sampled with probability 0.80 and a participant with a precise prediction with probability 0.20. At database lock, the treatment assignments are unblinded and the arms are compared with the estimator of Section~\ref{sec:estimation}, using the recorded probabilities.

If $X$ contains postrandomization measurements, these can carry indirect information about treatment. This does not affect validity. It does mean that withholding the treatment code from the sampling procedure is a weaker condition than requiring all data used for sampling to be unrelated to treatment.

\section{Treatment effect estimation}
\label{sec:estimation}

For participant $i$ define the corrected outcome
\begin{equation}
U_i
=
f_i(X_i)
+
\frac{R_i}{\pi_i(X_i)}\{Y_i-f_i(X_i)\}.
\label{eq:corrected}
\end{equation}
For participants in $S_0$, $U_i=Y_i$. Let $n_a=\sum_{i=1}^n\ind(A_i=a)$. The arm means and the treatment effect are estimated by
\begin{equation}
\widehat\mu_a
=
\frac{1}{n_a}\sum_{i:A_i=a}U_i,
\qquad
\widehat\Delta=\widehat\mu_1-\widehat\mu_0.
\label{eq:estimator}
\end{equation}
The first term of \eqref{eq:corrected} is a prediction for every participant. The second term corrects the prediction error for participants whose outcomes are measured, and the correction is larger when the sampling probability is smaller. Let
\[
\Delta_n
=
\frac{1}{n_1}\sum_{i:A_i=1}Y_i
-
\frac{1}{n_0}\sum_{i:A_i=0}Y_i
\]
be the treatment difference that would be computed if every outcome were measured.

\begin{theorem}[Sampling unbiasedness]
\label{thm:unbiased}
If \eqref{eq:sampling} holds, then
\[
\E(\widehat\Delta\mid\mathcal D_n,S_0)=\Delta_n.
\]
The result holds for any predictable sequences $f_i$ and $\pi_i$ that satisfy \eqref{eq:sampling}, including sequences updated repeatedly from earlier measured outcomes.
\end{theorem}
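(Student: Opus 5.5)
The plan is to reduce the statement to a per-participant identity and then use linearity. Conditional on $(\mathcal D_n,S_0)$, the arm indicators $A_i$ and hence $n_a$ are fixed. So $\widehat\Delta$ is a fixed linear combination of $U_1,\ldots,U_n$ with coefficients $\ind(A_i=1)/n_1-\ind(A_i=0)/n_0$. It therefore suffices to show
\[
\E(U_i\mid\mathcal D_n,S_0)=Y_i\qquad\text{for every } i.
\]
Summing with these coefficients then gives $\Delta_n$. For $i\in S_0$ we have $R_i=\pi_i=1$ and $U_i=Y_i$ trivially, so only $i\in\mathcal M$ needs work.

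For $i\in\mathcal M$, I would condition further on the sampling history $\mathcal G_{i-1}=\sigma(\mathcal D_n,S_0,R_1,\ldots,R_{i-1})$. The key observation is that $f_i$ and $\pi_i$ are determined by $\mathcal H_{i-1}$. The history $\mathcal H_{i-1}$ consists of auxiliary variables, $S_0$, earlier decisions, and measured outcomes $R_jY_j$ for $j<i$. Each of these is a function of $(\mathcal D_n,S_0,R_1,\ldots,R_{i-1})$.

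If the fitting procedures involve auxiliary randomness, such as random initialisation of a learner, I would include it in the conditioning and rely on \eqref{eq:sampling} holding given it. This is the one point to state carefully. The paper's condition is phrased with $\pi_i(X_i)$ on the right-hand side, which implicitly treats $\pi_i$ as measurable with respect to the conditioning $\sigma$-field.

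Hence $f_i(X_i)$, $\pi_i(X_i)$ and $Y_i$ are all $\mathcal G_{i-1}$-measurable, and $\pi_i(X_i)\ge\epsilon>0$, so the division is well defined. Then
\[
\E(U_i\mid\mathcal G_{i-1})=f_i(X_i)+\frac{\E(R_i\mid\mathcal G_{i-1})}{\pi_i(X_i)}\{Y_i-f_i(X_i)\}=f_i(X_i)+\{Y_i-f_i(X_i)\}=Y_i,
\]
using \eqref{eq:sampling}. Applying the tower property over $\mathcal G_{i-1}\supseteq\sigma(\mathcal D_n,S_0)$ gives $\E(U_i\mid\mathcal D_n,S_0)=Y_i$. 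Integrability is immediate, since $|U_i|\le|f_i(X_i)|+\epsilon^{-1}|Y_i-f_i(X_i)|$ and everything except $R_i$ is fixed given $\mathcal G_{i-1}$. No assumption on the working models is needed, which gives the final sentence of the theorem: $f_i$ and $\pi_i$ may be arbitrary predictable sequences, including repeated refits.

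The only genuine obstacle is the measurability step: verifying that the predictable $f_i$ and $\pi_i$ are functions of $\mathcal G_{i-1}$ even though they are built from outcomes whose measurement was itself random. Once one notes that measured outcomes enter only as $R_jY_j$ with $j<i$, and that $Y_j\in\mathcal D_n$, this is immediate. Participants decided simultaneously pose no problem, because the arbitrary ordering within a batch just means $\pi_i$ ignores the earlier draws of the same batch, which is still predictable.
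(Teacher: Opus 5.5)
Your proposal is correct and matches the paper's proof essentially step for step. Both condition on $\sigma(\mathcal D_n,S_0,R_1,\ldots,R_{i-1})$, use predictability together with \eqref{eq:sampling} to get $\E(U_i\mid\cdot)=Y_i$, then apply iterated expectation and linearity with $n_a$ fixed given $\mathcal D_n$; your extra remarks on integrability and on auxiliary randomness in the fitting procedures are reasonable refinements that the paper leaves implicit.
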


Theorem~\ref{thm:unbiased} follows from the probability sampling design. The working prediction need not be a correct conditional mean, and the rule may change at every update. A poor model reduces precision but does not bias the estimator, provided the probabilities actually used are recorded. The estimator is the difference, or model-assisted, estimator of survey sampling \citep{SarndalSwenssonWretman1992,BreidtOpsomer2017} and the augmented inverse probability weighted estimator with known probabilities \citep{RobinsRotnitzkyZhao1994}. Theorem~\ref{thm:unbiased} also holds if $\pi_i$ depended on $A_i$. Treatment assignment is withheld to protect the blinding of trial conduct, and Section~\ref{sec:blinding-cost} quantifies the resulting loss of precision.

For large-sample inference, define
\[
h_{i,n}
=
\frac{n\ind(A_i=1)}{n_1}
-
\frac{n\ind(A_i=0)}{n_0}.
\]
The estimator has the exact decomposition
\begin{equation}
\sqrt n(\widehat\Delta-\Delta)
=
\sqrt n(\Delta_n-\Delta)
+
\frac{1}{\sqrt n}\sum_{i=1}^n
h_{i,n}
\left(\frac{R_i}{\pi_i}-1\right)
\{Y_i-f_i(X_i)\}.
\label{eq:decomp-estimator}
\end{equation}
The first term is the variation of the treatment comparison with all outcomes measured. Given $\mathcal D_n$ and $S_0$, the second term is a sum of martingale differences generated by the sampling design.

\begin{theorem}[Large-sample distribution]
\label{thm:clt}
Assume that
\[
\sqrt n(\Delta_n-\Delta)\overset{d}{\longrightarrow}N(0,V_Y),
\qquad V_Y=\frac{\Var(Y\mid A=1)}{p_1}+\frac{\Var(Y\mid A=0)}{p_0}.
\]
Suppose that \eqref{eq:sampling} holds with predictable $f_i$ and $\pi_i$, that $\sup_{i,n}\E\{Y_i^4+f_i(X_i)^4\}<\infty$, and that
\begin{equation}
\frac{1}{n}\sum_{i=1}^n
h_{i,n}^2
\frac{1-\pi_i}{\pi_i}
\{Y_i-f_i(X_i)\}^2
\overset{p}{\longrightarrow}V_R
\label{eq:predvar}
\end{equation}
for a constant $V_R<\infty$. Then
\[
\sqrt n(\widehat\Delta-\Delta)
\overset{d}{\longrightarrow}
N(0,V_Y+V_R).
\]
In addition, $n\,\widehat{\operatorname{se}}(\widehat\Delta)^2\overset{p}{\longrightarrow}V_Y+V_R$, where
\begin{equation}
\widehat{\operatorname{se}}(\widehat\Delta)
=
\left(\frac{s_1^2}{n_1}+\frac{s_0^2}{n_0}\right)^{1/2}
\label{eq:se}
\end{equation}
and $s_a^2$ is the sample variance of $U_i$ in arm $a$.
\end{theorem}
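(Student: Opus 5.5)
We will prove Theorem~\ref{thm:clt} by starting from the exact decomposition \eqref{eq:decomp-estimator}. Write $T_n=\sqrt n(\Delta_n-\Delta)$ and
\[
M_n=\frac{1}{\sqrt n}\sum_{i=1}^n \xi_{i,n},\qquad \xi_{i,n}=h_{i,n}\Bigl(\frac{R_i}{\pi_i}-1\Bigr)\{Y_i-f_i(X_i)\}.
\]
Consider the filtration $\mathcal F_{i}=\sigma(\mathcal D_n,S_0,R_1,\ldots,R_i)$. Since $f_i,\pi_i$ are $\mathcal H_{i-1}$-measurable and $\mathcal H_{i-1}\subseteq\mathcal F_{i-1}$, condition \eqref{eq:sampling} gives $\E(\xi_{i,n}\mid\mathcal F_{i-1})=0$. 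Hence $\{\xi_{i,n}/\sqrt n\}$ is a martingale difference array. Its first-stage sigma field $\mathcal F_0$ contains $\mathcal D_n$, and therefore $T_n$. The conditional variances are
\[
\E(\xi_{i,n}^2\mid\mathcal F_{i-1})=h_{i,n}^2\frac{1-\pi_i}{\pi_i}\{Y_i-f_i(X_i)\}^2,
\]
so the sum of conditional variances divided by $n$ is exactly the left side of \eqref{eq:predvar}, which converges in probability to the constant $V_R$. Participants in $S_0$ contribute zero.

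The joint limit will come from a conditional martingale CLT, in the stable-convergence form of Hall and Heyde (Theorem 3.2 and Corollary 3.1). We apply it to the array augmented with $T_n$ as an $\mathcal F_0$-measurable initial term. Since the limiting conditional variance $V_R$ is a constant, this gives, for every $t$,
\[
\E\{e^{itM_n}\mid\mathcal F_0\}-e^{-t^2V_R/2}\to0
\]
in probability. The main technical point is that the filtrations are nested over $n$ only weakly. To handle it, we instead verify directly that $\E\{e^{itM_n}\mid\mathcal D_n,S_0\}\to e^{-t^2V_R/2}$ in probability via the standard Lindeberg/product argument, which only needs conditional Lindeberg and the variance condition. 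Then
\[
\E e^{isT_n+itM_n}=\E\bigl[e^{isT_n}\E(e^{itM_n}\mid\mathcal F_0)\bigr]\to e^{-s^2V_Y/2}e^{-t^2V_R/2}
\]
by bounded convergence, which yields $T_n+M_n\to N(0,V_Y+V_R)$.

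The conditional Lindeberg condition is the step needing care. We bound it by a Lyapunov-type fourth moment condition. Since $|h_{i,n}|\le n/\min(n_0,n_1)=O_p(1)$, because $n_a/n\to p_a>0$, and $\pi_i\ge\epsilon$, we have
\[
\E(\xi_{i,n}^4\mid\mathcal F_{i-1})\le C\,h_{i,n}^4\epsilon^{-3}\{Y_i-f_i(X_i)\}^4 .
\]
This gives
\[
n^{-2}\sum_i\E(\xi_{i,n}^4\mid\mathcal F_{i-1})=O_p(1)\cdot n^{-2}\sum_i\{Y_i^4+f_i^4\}.
\]
The last factor has expectation $O(n^{-1})$ by the uniform fourth moment bound, so it is $o_p(1)$ by Markov's inequality. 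If $h_{i,n}$ must be bounded deterministically, we truncate on the event $\{n_a/n>p_a/2\}$, whose probability tends to one.

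For the variance estimator, write $s_a^2=n_a^{-1}\sum_{A_i=a}U_i^2-\widehat\mu_a^2$ up to a factor $(n_a-1)/n_a$, and expand $U_i=Y_i+(R_i/\pi_i-1)(Y_i-f_i)$. Three things must be shown.
\begin{enumerate}
\item $n_a^{-1}\sum_{A_i=a}Y_i^2\to\E(Y^2\mid A=a)$ by the law of large numbers.
\item The cross terms $n_a^{-1}\sum_{A_i=a} Y_i(R_i/\pi_i-1)(Y_i-f_i)$ are martingale averages with bounded second moments, so they are $o_p(1)$.
\item The quadratic term $n_a^{-1}\sum_{A_i=a}(R_i/\pi_i-1)^2(Y_i-f_i)^2$ equals its compensator $n_a^{-1}\sum_{A_i=a}\frac{1-\pi_i}{\pi_i}(Y_i-f_i)^2$ plus a martingale average. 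That martingale average is $o_p(1)$ by the fourth-moment bound.
\end{enumerate}
Also $\widehat\mu_a\to\mu_a$ in probability. Combining, and using $h_{i,n}^2=(n/n_a)^2$ in arm $a$, we get
\[
n\Bigl(\frac{s_1^2}{n_1}+\frac{s_0^2}{n_0}\Bigr)=V_Y+\frac1n\sum_i h_{i,n}^2\frac{1-\pi_i}{\pi_i}(Y_i-f_i)^2+o_p(1)\to V_Y+V_R.
\]
The arm-specific compensators are not separately assumed to converge, but only their $h^2$-weighted sum appears, and that sum is exactly the quantity in \eqref{eq:predvar}.
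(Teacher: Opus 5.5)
Your proposal is correct and follows the paper's proof essentially step for step. It uses the same decomposition and filtration, the same fourth-moment Lyapunov bound with $\max_a(n/n_a)^4=O_p(1)$, and the same conditional characteristic function $\E\{e^{itM_n}\mid\mathcal D_n,S_0\}$ combined with $T_n$ by bounded convergence. It also uses the same expansion of $s_a^2$ into $S_{Y,a}^2$, the compensator $n_a^{-1}\sum_{A_i=a}(1-\pi_i)\pi_i^{-1}\{Y_i-f_i(X_i)\}^2$, and negligible martingale averages.

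The only real difference is how you justify the in-probability convergence of the conditional characteristic function. The paper uses a subsequence argument that applies Hall and Heyde's CLT under the conditional law. You instead appeal directly to the product/Lindeberg argument for a conditional CLT. Your opening idea of augmenting the array with $T_n$ is unnecessary and does not fit the martingale-difference framework, since $T_n$ is $\mathcal F_0$-measurable, but you rightly set it aside.
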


The theorem allows the prediction model and the sampling rule to be refitted repeatedly. Each fitted rule must be fixed before it is applied to a new group of participants. This holds when the rule is fitted to the outcomes measured up to a prespecified update and applied only afterward. The within-arm centering in \eqref{eq:se} suits the fixed arm totals common in trials. The fourth-moment condition implies the conditional Lindeberg condition of the martingale central limit theorem. Condition \eqref{eq:predvar} holds when the fitted rules settle down, for example when $f_i$ and $\pi_i$ converge to fixed functions $\bar f$ and $\bar\pi$. Under equal allocation $V_R$ is then $4L_{r_{\bar f}}(\bar\pi)$ in the notation of \eqref{eq:variance} below. Theorem~\ref{thm:unbiased} corresponds to the martingale-difference property used by \citet{ZrnicCandes2024}, and Theorem~\ref{thm:clt} adapts their Proposition~6.1 (as in the proceedings version) to the present setting.

\begin{remark}[Arm-specific predictions at the final analysis]
\label{rem:arm-specific}
Theorem~\ref{thm:unbiased} requires $f_i$ to be fixed before $R_i$ is drawn. After unblinding, one may wish to replace the pooled prediction in \eqref{eq:corrected} by arm-specific predictions fitted to the measured outcomes. With the true arm-specific means, this can only reduce the sampling part of the variance for a given design (Section~\ref{sec:blinding-cost}). With fitted predictions, exact unbiasedness no longer follows from Theorem~\ref{thm:unbiased}, and large-sample validity requires further conditions, for example cross-fitting \citep{ChernozhukovEtAl2018}. We do not study this estimator here.
\end{remark}

\section{Choosing the sampling probabilities}
\label{sec:choosing}

\subsection{Variance for a fixed rule}

The calculation is simplest under equal allocation with a common working prediction $f(X)$ and sampling probability $\pi(X)$ that do not change during the trial. Let
\[
r_f(x)=\E\{(Y-f(X))^2\mid X=x\}.
\]
The asymptotic variance of \eqref{eq:estimator} can be written as
\begin{equation}
V(f,\pi)
=
V_{\mathrm{full}}
+
4L_{r_f}(\pi),
\qquad
L_r(\pi)
=
\E\left\{\frac{1-\pi(X)}{\pi(X)}r(X)\right\},
\label{eq:variance}
\end{equation}
where $V_{\mathrm{full}}=2\Var(Y\mid A=1)+2\Var(Y\mid A=0)$ is the variance when every outcome is measured. Only $L_{r_f}(\pi)$ depends on the sampling rule.

\begin{remark}[Unequal allocation]
\label{rem:unequal}
For general $p_1$ the sampling part of the variance is
\[
\E\left[\frac{1-\pi(X)}{\pi(X)}\,\frac{\{Y-f(X)\}^2}{p_A^2}\right]
\]
(Appendix~\ref{app:proofs}). With baseline $X$, its conditional version is $\sum_a r_{f,a}(x)/p_a$, where $r_{f,a}(x)=\E[\{Y-f(X)\}^2\mid A=a,X=x]$. This quantity weights the two arms differently and cannot be estimated from pooled data. The pooled residual variance used below is therefore the right target only under equal allocation. Under unequal allocation the blinded rule minimizes a different objective, and the resulting loss is similar in kind to $G_\sigma$ in Theorem~\ref{thm:pooled-gap}.
\end{remark}

For a target fraction $\rho$ of outcomes measured and bounds $\epsilon<\rho\leq u\leq1$, let
\[
\mathcal P_\rho
=
\{\pi:\epsilon\leq\pi(X)\leq u,\ \E\pi(X)=\rho\}.
\]
For a nonnegative function $r$, write $\pi_r^*$ for a minimizer of $L_r(\pi)$ over $\mathcal P_\rho$.

\begin{proposition}[Oracle sampling probability]
\label{prop:optimal}
For fixed $f$ with $\Prb\{r_f(X)>0\}=1$, $\pi_{r_f}^*$ has the form
\begin{equation}
\pi_{r_f}^*(x)
=
\min\left[u,
\max\left\{\epsilon,c\sqrt{r_f(x)}\right\}
\right],
\label{eq:optimal}
\end{equation}
where $c>0$ is chosen so that $\E\pi(X)=\rho$. If neither bound is active,
\[
\pi_{r_f}^*(x)
=
\rho\frac{\sqrt{r_f(x)}}{\E\{\sqrt{r_f(X)}\}}.
\]
\end{proposition}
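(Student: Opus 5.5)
Minimizing $L_r(\pi)$ over $\mathcal P_\rho$ is a convex problem, so I would solve it by Lagrangian duality applied pointwise in $x$. Since $\E\{r(X)\}$ does not depend on $\pi$, write
\[
L_r(\pi)=\E\left\{\frac{r(X)}{\pi(X)}\right\}-\E\{r(X)\}.
\]
It therefore suffices to minimize $\E\{r(X)/\pi(X)\}$ subject to $\epsilon\le\pi\le u$ and $\E\pi(X)=\rho$. The map $t\mapsto r/t$ is strictly convex on $(0,\infty)$ when $r>0$. Since $\Prb\{r_f(X)>0\}=1$, the objective is strictly convex on the convex set $\mathcal P_\rho$. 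Any minimizer is therefore unique up to null sets, and the conditions derived below are sufficient as well as necessary.

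For a multiplier $\lambda>0$, I would minimize the Lagrangian
\[
\E\left\{\frac{r(X)}{\pi(X)}+\lambda\pi(X)\right\}-\lambda\rho
\]
over $\epsilon\le\pi\le u$ with no moment constraint. This can be done separately for each $x$. The scalar function $g(t)=r/t+\lambda t$ on $[\epsilon,u]$ is strictly convex, and its unconstrained minimizer is $t=\sqrt{r/\lambda}$. Its constrained minimizer is therefore $\min[u,\max\{\epsilon,c\sqrt r\}]$ with $c=\lambda^{-1/2}$. To justify optimality, take any $\pi\in\mathcal P_\rho$ and let $\pi_c$ be this clipped rule chosen so that $\E\pi_c(X)=\rho$. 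Pointwise minimality gives $\E\{r/\pi+\lambda\pi\}\ge\E\{r/\pi_c+\lambda\pi_c\}$. The $\lambda$ terms cancel because both rules have mean $\rho$, so $\E(r/\pi)\ge\E(r/\pi_c)$. This avoids any appeal to infinite-dimensional KKT theory.

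The main obstacle is showing that a $c>0$ with $\E\pi_c(X)=\rho$ exists. Let $\phi(c)=\E\min[u,\max\{\epsilon,c\sqrt{r(X)}\}]$. This function is nondecreasing in $c$, and it is continuous by dominated convergence since the integrand is bounded by $u$. As $c\downarrow0$, $\phi(c)\to\epsilon$, and as $c\to\infty$, $\phi(c)\to u$, using $r>0$ almost surely. Because $\epsilon<\rho\le u$, the intermediate value theorem gives a $c$ with $\phi(c)=\rho$. The edge case $\rho=u$ can be handled by a limit, where $\pi\equiv u$ is the only feasible rule and also has the stated form. I also need $\E\{r(X)/\pi_c(X)\}<\infty$. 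This holds if $\E\, r(X)<\infty$, since $\pi_c\ge\epsilon$; that finiteness is implicit in the variance formula \eqref{eq:variance}.

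The last claim is the case where neither bound is active, so that $\pi^*(x)=c\sqrt{r_f(x)}$ almost surely. The constraint $\E\pi^*(X)=\rho$ gives $c=\rho/\E\{\sqrt{r_f(X)}\}$, which yields the displayed formula.
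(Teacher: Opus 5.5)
Your proof is correct, and it takes a somewhat different route from the paper. The paper treats the untruncated case with the Cauchy--Schwarz inequality $\E\{r(X)/\pi(X)\}\,\E\{\pi(X)\}\geq[\E\{\sqrt{r(X)}\}]^2$. That single line gives both the proportional rule and the optimal value $\{\E\sqrt{r(X)}\}^2/\rho-\E r(X)$, which the paper reuses in Proposition~\ref{prop:saving}, Theorem~\ref{thm:adaptive-bound}(ii), and Theorem~\ref{thm:pooled-gap}. For the truncated case the paper only invokes the Karush--Kuhn--Tucker conditions, and it argues the existence of $c$ exactly as you do.

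Your argument differs in three ways:
\begin{itemize}
\item You minimize $r/t+\lambda t$ pointwise over $[\epsilon,u]$ and then cancel the $\lambda\rho$ terms against any feasible $\pi$. This is a complete, elementary sufficiency proof that covers both bounds at once, so it supplies what the paper leaves to an unstated infinite-dimensional KKT appeal.
\item Your strict-convexity step shows that every minimizer equals the clipped rule almost surely. The statement needs this, since it asserts that $\pi^*_{r_f}$ has the clipped form, but the paper does not address it.
\item You make explicit the finiteness of $\E r(X)$ needed for the objective to be finite.
\end{itemize}

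What your route does not give directly is the closed-form minimal value. In the untruncated case it follows at once by substituting $\pi=\rho\sqrt{r}/\E\sqrt{r}$.

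One caveat, shared with the paper: when $\rho=u$ and $r$ is not essentially bounded away from zero, no finite $c$ gives $\E\pi_c(X)=\rho$. The unique feasible rule $\pi\equiv u$ then has the displayed form only under the convention $c=\infty$.
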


This is Neyman allocation \citep{Neyman1934,Cochran1977} applied to the prediction residuals. The same rule gives the optimal two-phase sampling probabilities for this estimator in randomized trials \citep{GilbertYuRotnitzky2014} and for design-based estimators \citep{ChenLumley2022}, and \citet{ZrnicCandes2024} derived it for active inference. We use it as an oracle benchmark. In practice $r_f$ is unknown and changes when the working prediction changes, so an adaptive design must learn the prediction and its residual variance while outcomes are being collected.

The oracle rule also gives a simple planning calculation. Let $K_f=\E\sqrt{r_f(X)}$ and let $\kappa_f=[\Var\{\sqrt{r_f(X)}\}]^{1/2}/K_f$ be the coefficient of variation of the residual standard deviation.

\begin{proposition}[Measurements saved by the oracle rule]
\label{prop:saving}
Assume equal allocation, fix $f$, and suppose that the oracle probabilities are not truncated.
\begin{enumerate}[label=(\roman*),nosep,leftmargin=2em]
\item Suppose $\rho(1+\kappa_f^2)\leq1$. Simple random sampling of a fraction $\rho_s$, with the same prediction $f$, has the same asymptotic variance as the oracle rule with fraction $\rho$ if and only if $\rho_s=\rho(1+\kappa_f^2)$.
\item The oracle rule attains asymptotic variance $v$, on the scale of $n\Var(\widehat\Delta)$, when
\[
\rho=\frac{4K_f^2}{v-V_{\mathrm{full}}+4\E r_f(X)},
\]
provided that this value lies in $(0,1]$ and the resulting probabilities are not truncated.
\end{enumerate}
\end{proposition}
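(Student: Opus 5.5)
The plan is to compute $L_{r_f}(\pi)$ in closed form for the two rules being compared, and then solve the resulting equations. Everything rests on the decomposition \eqref{eq:variance}: under equal allocation and with a fixed $f$, the asymptotic variance is $V_{\mathrm{full}}+4L_{r_f}(\pi)$. Two rules with the same $f$ therefore have the same asymptotic variance exactly when their values of $L_{r_f}$ coincide. Write $r=r_f$, $K=K_f$, $\kappa=\kappa_f$. Since $\kappa^2=\{\E r(X)-K^2\}/K^2$, we have the identity $\E r(X)=K^2(1+\kappa^2)$, which will be used throughout.

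\emph{Step 1: the two values of $L_r$.} For simple random sampling at constant probability $\rho_s\in(0,1]$,
\[
L_r(\rho_s)=\frac{1-\rho_s}{\rho_s}\,\E r(X)=\Bigl(\frac{1}{\rho_s}-1\Bigr)K^2(1+\kappa^2).
\]
For the untruncated oracle rule, Proposition~\ref{prop:optimal} gives $\pi^*(x)=\rho\sqrt{r(x)}/K$. Using $L_r(\pi)=\E\{r/\pi\}-\E r$, we get
\[
L_r(\pi^*)=\frac{K}{\rho}\,\E\sqrt{r(X)}-\E r(X)=\frac{K^2}{\rho}-K^2(1+\kappa^2).
\]
The assumption $\Prb\{r(X)>0\}=1$ is what makes the division by $\pi^*$ legitimate.

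\emph{Step 2: part (i).} Setting the two expressions equal and dividing by $K^2>0$, the terms $-(1+\kappa^2)$ cancel on both sides. What remains is $(1+\kappa^2)/\rho_s=1/\rho$, so $\rho_s=\rho(1+\kappa^2)$. Because $\rho_s\mapsto L_r(\rho_s)$ is strictly decreasing on $(0,1]$ whenever $\E r>0$, this solution is unique, which gives the ``if and only if''. The hypothesis $\rho(1+\kappa^2)\le1$ is exactly what ensures that this $\rho_s$ is a valid sampling fraction. If it fails, no simple random design with the same $f$ matches the oracle variance.

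\emph{Step 3: part (ii).} Set $V_{\mathrm{full}}+4\{K^2/\rho-\E r\}=v$ and solve for $\rho$, which gives $\rho=4K^2/(v-V_{\mathrm{full}}+4\E r)$. This inversion is valid only if the oracle formula applies at that value of $\rho$. That is why the statement requires the value to lie in $(0,1]$ and requires $\rho\sqrt{r}/K\le u$ and $\ge\epsilon$ pointwise. Under these conditions the rule is feasible in $\mathcal P_\rho$ and the computation in Step 1 applies.

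\emph{Main obstacle.} The algebra is routine. The only delicate point is bookkeeping: checking that the untruncated form is the actual minimizer, so that ``the oracle rule'' has the variance computed above, and that the monotonicity needed for uniqueness holds. I would handle both by citing Proposition~\ref{prop:optimal} together with the strict monotonicity of $L_r$ in a constant $\rho_s$.
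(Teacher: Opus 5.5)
Your proposal is correct and follows the paper's proof essentially step for step. You compute $L_{r_f}$ for the untruncated Neyman rule, $K_f^2/\rho-\E r_f(X)$, and for a constant fraction, $\E r_f(X)(1/\rho_s-1)$, then equate them through \eqref{eq:variance} using $\E r_f(X)=K_f^2(1+\kappa_f^2)$ and invert $V_{\mathrm{full}}+4\{K_f^2/\rho-\E r_f(X)\}=v$ for part (ii). Your monotonicity argument for uniqueness is harmless but unnecessary, since the matching condition $\E r_f(X)/\rho_s=K_f^2/\rho$ is already linear in $1/\rho_s$.
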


Part (i) is the classical comparison of Neyman allocation with proportional allocation \citep{Cochran1977}. For the same precision, the oracle rule measures a fraction $1/(1+\kappa_f^2)$ of the outcomes that simple random sampling measures. The saving depends only on how much the residual standard deviation varies across participants, and it is zero when the residual variance is constant. Part (ii) can be used at the planning stage, with values of $K_f$, $\E r_f(X)$, and $V_{\mathrm{full}}$ taken from earlier trials or pilot data. Both parts describe an oracle. The adaptive design also pays for learning the rule and for its initial sample, and the next two subsections quantify these costs.

\subsection{Cost of learning the sampling rule}

The calculations below use population expectations for readability. The same inequalities hold with empirical averages over a prespecified group of participants whose auxiliary variables are already available. Suppose that before sampling stage $k$ the outcomes already measured are used to construct a working prediction $\widehat f_k$ and an estimate $\widehat r_k$ of its conditional residual variance. Let $\mathcal G_k$ denote the information available before stage $k$. Because participants are independent and $S_0$ is chosen at random, the conditional distribution of $Y$ given $X$ among stage-$k$ participants does not depend on $\mathcal G_k$. Define
\[
r_k(x)
=
\E\{(Y-\widehat f_k(X))^2\mid X=x,\mathcal G_k\}
\qquad\text{and}\qquad
\widehat\pi_k
\in
\arg\min_{\pi\in\mathcal P_\rho}L_{\widehat r_k}(\pi).
\]
The next result gives the price of using the estimated rule instead of the oracle rule for the current working prediction. Let
\[
C_\epsilon=\frac{1-\epsilon}{\epsilon},
\]
and write $\|g\|_p=\{\E|g(X)|^p\}^{1/p}$, with the expectation taken over the auxiliary variables of stage-$k$ participants.

\begin{theorem}[Error from estimating the sampling rule]
\label{thm:adaptive-bound}
Conditional on $\mathcal G_k$, the following hold.
\begin{enumerate}[label=(\roman*),nosep,leftmargin=2em]
\item For every estimate $\widehat r_k$,
\begin{equation}
0\leq
L_{r_k}(\widehat\pi_k)
-
L_{r_k}(\pi_{r_k}^*)
\leq
\E\left[\left\{\frac{1}{\widehat\pi_k(X)}-\frac{1}{\pi_{r_k}^*(X)}\right\}\{r_k(X)-\widehat r_k(X)\}\right]
\leq
C_\epsilon\|\widehat r_k-r_k\|_1.
\label{eq:regret1}
\end{equation}
\item If neither $\widehat\pi_k$ nor $\pi^*_{r_k}$ is truncated and $\widehat r_k>0$, then
\begin{equation}
L_{r_k}(\widehat\pi_k)
-
L_{r_k}(\pi_{r_k}^*)
\leq
\frac{\E\sqrt{\widehat r_k(X)}}{\rho}\,
\E\left[\frac{\{\sqrt{r_k(X)}-\sqrt{\widehat r_k(X)}\}^2}{\sqrt{\widehat r_k(X)}}\right].
\label{eq:regret-quad}
\end{equation}
In particular, if $\widehat r_k\geq r_{\min}>0$, the right-hand side is at most
\[
\frac{\E\sqrt{\widehat r_k(X)}}{\rho\sqrt{r_{\min}}}\,\|\sqrt{\widehat r_k}-\sqrt{r_k}\|_2^2.
\]
\item Let
\[
f_P(x)=\E(Y\mid X=x),
\qquad
r_P(x)=\E\{(Y-f_P(X))^2\mid X=x\},
\]
and let $\Lambda_k$ be the right-hand side of \eqref{eq:regret1} or, when (ii) applies, the smaller of the right-hand sides of \eqref{eq:regret1} and \eqref{eq:regret-quad}. Then
\begin{equation}
0\leq
L_{r_k}(\widehat\pi_k)
-
L_{r_P}(\pi_{r_P}^*)
\leq
\Lambda_k
+
C_\epsilon\|\widehat f_k-f_P\|_2^2.
\label{eq:regret2}
\end{equation}
\end{enumerate}
The extra contribution to the variance during stage $k$ is therefore at most four times the right-hand side of \eqref{eq:regret2}.
\end{theorem}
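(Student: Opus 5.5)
The argument is a standard excess-risk comparison for linear-in-$r$ objectives over a fixed feasible set. The key fact is that $L_r(\pi)=\E\{r(X)/\pi(X)\}-\E r(X)$ is linear in $r$ and convex in $\pi$, and $\mathcal P_\rho$ does not depend on $r$.

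\textbf{Part (i).} The lower bound holds because $\widehat\pi_k\in\mathcal P_\rho$ and $\pi^*_{r_k}$ minimizes $L_{r_k}$ over $\mathcal P_\rho$. For the upper bound, write
\[
L_{r_k}(\widehat\pi_k)-L_{r_k}(\pi^*_{r_k})
=\{L_{r_k}(\widehat\pi_k)-L_{\widehat r_k}(\widehat\pi_k)\}
+\{L_{\widehat r_k}(\widehat\pi_k)-L_{\widehat r_k}(\pi^*_{r_k})\}
+\{L_{\widehat r_k}(\pi^*_{r_k})-L_{r_k}(\pi^*_{r_k})\}.
\]
The middle term is $\le0$ because $\widehat\pi_k$ minimizes $L_{\widehat r_k}$. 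By linearity in $r$, the outer terms combine to $\E[\{(1-\widehat\pi_k)/\widehat\pi_k-(1-\pi^*_{r_k})/\pi^*_{r_k}\}(r_k-\widehat r_k)]$, and the constants $-1$ cancel. This gives the middle expression in \eqref{eq:regret1}. For the last inequality, both probabilities lie in $[\epsilon,1]$, so $|1/\widehat\pi_k-1/\pi^*_{r_k}|\le 1/\epsilon-1=C_\epsilon$. Hölder's inequality then gives the bound by $C_\epsilon\|\widehat r_k-r_k\|_1$.

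\textbf{Part (ii).} Without truncation, Proposition~\ref{prop:optimal} gives $\widehat\pi_k=\rho\sqrt{\widehat r_k}/\widehat K$ with $\widehat K=\E\sqrt{\widehat r_k}$, and $L_{r_k}(\pi^*_{r_k})=(\E\sqrt{r_k})^2/\rho-\E r_k$. Then
\[
L_{r_k}(\widehat\pi_k)-L_{r_k}(\pi^*_{r_k})=\frac{\widehat K}{\rho}\E\frac{r_k}{\sqrt{\widehat r_k}}-\frac{(\E\sqrt{r_k})^2}{\rho}.
\]
Write $a=\sqrt{r_k}$ and $b=\sqrt{\widehat r_k}$, and use the identity $a^2/b=(a-b)^2/b+2a-b$. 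This gives
\[
\widehat K\,\E(a^2/b)=\widehat K\,\E\{(a-b)^2/b\}+2\E a\,\E b-(\E b)^2.
\]
Since $2\E a\,\E b-(\E b)^2-(\E a)^2=-(\E a-\E b)^2\le0$, the bound \eqref{eq:regret-quad} follows. The $r_{\min}$ version is immediate from $1/b\le 1/\sqrt{r_{\min}}$.

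\textbf{Part (iii).} This is the step needing care, namely relating $r_k$ to $r_P$. Conditional on $\mathcal G_k$, $\widehat f_k$ is fixed, and the conditional law of $Y$ given $X$ is the population one. So $r_k=r_P+(\widehat f_k-f_P)^2\ge r_P$ pointwise, with the cross term vanishing because $\E(Y-f_P\mid X)=0$.

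\emph{Lower bound.} Monotonicity of $L_r(\pi)$ in $r$ gives $L_{r_k}(\widehat\pi_k)\ge L_{r_P}(\widehat\pi_k)\ge L_{r_P}(\pi^*_{r_P})$.

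\emph{Upper bound.} Split
\[
L_{r_k}(\widehat\pi_k)-L_{r_P}(\pi^*_{r_P})=\{L_{r_k}(\widehat\pi_k)-L_{r_k}(\pi^*_{r_k})\}+\{L_{r_k}(\pi^*_{r_k})-L_{r_P}(\pi^*_{r_P})\}.
\]
The first bracket is at most $\Lambda_k$ by (i) and (ii). For the second, optimality of $\pi^*_{r_k}$ gives
\[
L_{r_k}(\pi^*_{r_k})\le L_{r_k}(\pi^*_{r_P})=L_{r_P}(\pi^*_{r_P})+\E\bigl[\{(1-\pi^*_{r_P})/\pi^*_{r_P}\}(\widehat f_k-f_P)^2\bigr]\le L_{r_P}(\pi^*_{r_P})+C_\epsilon\|\widehat f_k-f_P\|_2^2.
\]

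\textbf{Final sentence.} Under equal allocation, \eqref{eq:variance} shows that the sampling contribution to the variance in stage $k$ is $4L_{r_k}(\widehat\pi_k)$, compared with the oracle $4L_{r_P}(\pi^*_{r_P})$. The excess is therefore at most four times the right-hand side of \eqref{eq:regret2}.

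The main obstacles are modest: justifying that conditioning on $\mathcal G_k$ leaves the stage-$k$ law of $(X,Y)$ unchanged (from independence and the random choice of $S_0$), and making sure the minimizers exist, which the statement assumes through its definitions.
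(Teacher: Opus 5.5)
Your proposal is correct and follows the paper's proof essentially step for step. Part (i): your three-term split is the paper's subtraction of $L_{r_k}(\widehat\pi_k)-L_{r_k}(\pi^*)=\E\{(1/\widehat\pi_k-1/\pi^*)r_k\}$ from its $\widehat r_k$ analogue, which is $\le 0$. Part (ii): your identity $a^2/b=(a-b)^2/b+2a-b$ is the paper's expansion in $\delta=h-g$, dropping the same term $-(\E\delta)^2$. Part (iii): you use the same identity $r_k=r_P+(\widehat f_k-f_P)^2$ and the same comparison at $\pi^*_{r_P}$. The only difference is cosmetic: you bound $|1/\widehat\pi_k-1/\pi^*_{r_k}|$ by $1/\epsilon-1$ directly instead of by $1/\epsilon-1/u$, which is equivalent since $u\le1$.
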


The middle term in \eqref{eq:regret1} averages the product of two errors. One is the error of the fitted residual variance, and the other is the resulting error of the inverse sampling probabilities. The linear bound uses only that both probabilities lie in $[\epsilon,u]$. When the map from $r$ to $\pi_r^*$ is Lipschitz, the product is of order $\|\widehat r_k-r_k\|_2^2$, and part (ii) makes this explicit without truncation. The excess is second order because first-order errors in the probabilities do not change the objective at its minimum. The prediction term in \eqref{eq:regret2} is also quadratic. A good predictor combined with a poor residual variance model can still sample inefficiently, and a good residual variance model does not remove the prediction term. Neither error affects validity. Both determine how close the design comes to the oracle variance.

If the rule is updated several times, the same argument applies at each stage. Suppose there are $N_n$ sampling stages, and let $w_{k,n}$ be the proportion of participants assigned to stage $k$, excluding an initial random sample whose fraction is $o(1)$. Define the average sampling contribution and its oracle value,
\[
\overline V_{R,n}
=
4\sum_{k=1}^{N_n}w_{k,n}L_{r_k}(\widehat\pi_k)
\qquad\text{and}\qquad
V_R^*=4L_{r_P}(\pi_{r_P}^*).
\]

\begin{corollary}[Efficiency under repeated updating]
\label{cor:repeated}
Under the conditions of Theorem~\ref{thm:adaptive-bound},
\begin{equation}
0\leq
\overline V_{R,n}-V_R^*
\leq
4
\sum_{k=1}^{N_n}w_{k,n}
\left
[\Lambda_k+C_\epsilon\|\widehat f_k-f_P\|_2^2
\right]
+o_p(1),
\label{eq:repeated-bound}
\end{equation}
where the $o_p(1)$ term accounts for an asymptotically negligible initial random sample and for empirical normalization of the stage-specific probabilities. In particular, $\Lambda_k\leq C_\epsilon\|\widehat r_k-r_k\|_1$. If the weighted errors on the right-hand side converge to zero, the adaptive design has the same first-order sampling variance as the pooled oracle. Combined with Theorem~\ref{thm:clt}, its treatment effect estimator then has the oracle first-order variance.
\end{corollary}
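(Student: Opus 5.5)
The corollary follows by applying Theorem~\ref{thm:adaptive-bound} stage by stage and averaging with the weights $w_{k,n}$. For each stage $k$, conditional on $\mathcal G_k$, part (iii) of Theorem~\ref{thm:adaptive-bound} gives
\[
0\leq L_{r_k}(\widehat\pi_k)-L_{r_P}(\pi^*_{r_P})\leq \Lambda_k+C_\epsilon\|\widehat f_k-f_P\|_2^2 .
\]
We multiply by $4w_{k,n}$ and sum over $k=1,\ldots,N_n$. The weights $w_{k,n}$ are nonnegative and, excluding the initial sample, sum to $1-o(1)$. Hence $\sum_k w_{k,n}\cdot 4L_{r_P}(\pi^*_{r_P})=V_R^*+o(1)$, since $V_R^*$ is a finite constant. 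This gives \eqref{eq:repeated-bound} with population expectations. The lower bound holds stage by stage. This is because $r_k\geq r_P$ pointwise, by the bias--variance decomposition $r_k=r_P+(\widehat f_k-f_P)^2$. So $L_{r_k}(\pi)\geq L_{r_P}(\pi)\geq L_{r_P}(\pi^*_{r_P})$ for every $\pi\in\mathcal P_\rho$. The stated bound $\Lambda_k\leq C_\epsilon\|\widehat r_k-r_k\|_1$ is just the last inequality in \eqref{eq:regret1}.

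Next we justify the $o_p(1)$ term, which has two sources. The first is the initial sample $S_0$. Its participants have $\pi_i=1$ and contribute nothing to \eqref{eq:predvar}. Their fraction $m/n$ is $o(1)$, so excluding them only rescales the weights by $1+o(1)$. The second is that the theorem uses population expectations over stage-$k$ auxiliary variables, while in practice the constraint $\E\pi=\rho$ is imposed empirically over the stage's participants. I would appeal to the remark preceding Theorem~\ref{thm:adaptive-bound}: the same inequalities hold with empirical averages over a prespecified group whose $X$ values are known. The difference between the empirical and population versions of $L_{r_k}(\widehat\pi_k)$ is then controlled by a law of large numbers within each stage. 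This uses that $(1-\pi)/\pi\leq C_\epsilon$ and that $r_k$ is integrable under the fourth-moment condition of Theorem~\ref{thm:clt}. The same law of large numbers handles the slight mismatch in the normalization constant $c$.

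The final claims follow directly. If the weighted error sum tends to zero in probability, then $\overline V_{R,n}\to V_R^*$ in probability. Condition \eqref{eq:predvar} concerns the realized average of $h_{i,n}^2(1-\pi_i)\pi_i^{-1}\{Y_i-f_i(X_i)\}^2$. Under equal allocation $h_{i,n}^2\to4$. The conditional expectation of each summand given $\mathcal G_k$ and $X_i$ is $4(1-\widehat\pi_k)\widehat\pi_k^{-1}r_k(X_i)$, because $A_i$ is independent of $(X_i,Y_i^a)$ and the pooled residual variance averages over arms. A martingale weak law then identifies the limit of \eqref{eq:predvar} with $\lim\overline V_{R,n}=V_R^*$. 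The summands are uniformly bounded in $L^2$ by the fourth moments and $\pi_i\geq\epsilon$. Theorem~\ref{thm:clt} then yields asymptotic variance $V_Y+V_R^*$.

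I expect the main obstacle to be this last identification. Stages may be numerous ($N_n$ growing), and the per-stage conditional expectations are random, so care is needed. The plan is to write the centered sum as a martingale in $i$ with respect to the filtration generated by $\mathcal G_k$ and previously processed participants. Its conditional variance is bounded by $Cn^{-2}\sum_i\E\{Y_i^4+f_i^4\}=O(1/n)$, and this gives the weak law without any condition on $N_n$.
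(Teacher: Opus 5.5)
Your proposal is correct and follows essentially the paper's route. You apply \eqref{eq:regret2} conditionally at each stage, multiply by $4w_{k,n}$ and sum, attribute the $o_p(1)$ term to the negligible initial sample and to empirical normalization within stages, and then feed the convergence of the predictable quadratic variation \eqref{eq:predvar} into Theorem~\ref{thm:clt}. You are more explicit than the paper about identifying that limit through a martingale weak law. The paper's proof adds a convexity argument for $v(b)=\min_{\pi\in\mathcal P_b}L_{r_P}(\pi)$, which keeps the lower bound valid when stages use different fractions $\rho_k$; that supports the remark after the corollary and is not needed for the statement itself.
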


Inequality~\eqref{eq:repeated-bound} is useful even when the oracle approximation is not yet accurate. It shows how a small initial sample, an unstable residual variance model, or repeated refitting with little new information can leave a noticeable efficiency gap. More frequent updating does not automatically improve precision. It helps only when the added outcomes reduce the weighted prediction and residual variance errors. If the initial random sample remains a fixed fraction of the trial as $n$ increases, its contribution does not vanish, and the design cannot match an oracle that allocates the whole budget from the start. The lower bound in \eqref{eq:repeated-bound} also holds if the stages use different expected fractions $\rho_k$ with $\sum_kw_{k,n}\rho_k=\rho$, because the minimum of $L_{r_P}$ over $\mathcal P_\rho$ is a convex function of $\rho$.

\subsection{Size of the initial random sample}

The bound also indicates how large the initial random sample should be. Suppose the design measures a fraction $\rho$ of the outcomes in expectation and first measures the outcomes of $m_n$ randomly selected participants before fitting the adaptive rule. Write $\alpha_n=m_n/n$. The remaining participants are then sampled with average probability
\[
\rho_n=\frac{\rho-\alpha_n}{1-\alpha_n}.
\]
A larger initial sample improves the fitted rule but spends more budget before the rule is available.

\begin{corollary}[Size of the initial random sample]
\label{cor:initial-size}
Assume $m_n\to\infty$ and $m_n/n\to0$, and that the oracle probabilities are not truncated for sampling fractions in a neighborhood of $\rho$. Let $\widehat f_1$ and $\widehat r_1$ be fitted to the $m_n$ initial outcomes, let $r_1$ be defined as in Theorem~\ref{thm:adaptive-bound}, and suppose that
\begin{equation}
\Lambda_1+C_\epsilon\|\widehat f_1-f_P\|_2^2
=
O_p(m_n^{-\gamma})
\label{eq:learning-rate}
\end{equation}
for some $\gamma>0$. Then the excess sampling variance relative to the pooled oracle with the same overall fraction $\rho$ is
\begin{equation}
O_p\left(\frac{m_n}{n}+m_n^{-\gamma}\right),
\label{eq:initial-regret}
\end{equation}
and the $m_n/n$ term has leading constant $4[K^2(1-\rho)^2/\rho^2+\Var\{\sqrt{r_P(X)}\}]$, where $K=\E\sqrt{r_P(X)}$. The order of this bound is smallest when
\[
m_n\asymp n^{1/(1+\gamma)},
\]
and the excess variance is then of order $n^{-\gamma/(1+\gamma)}$. For example, suppose both working models are correctly specified parametric models, so that $\|\widehat r_1-r_1\|_1$, $\|\sqrt{\widehat r_1}-\sqrt{r_1}\|_2$, and $\|\widehat f_1-f_P\|_2$ are $O_p(m_n^{-1/2})$. The linear bound in \eqref{eq:regret1} gives $\gamma=1/2$ and $m_n\asymp n^{2/3}$. If $r_P$ is bounded away from zero and truncation is inactive, \eqref{eq:regret-quad} gives $\gamma=1$, so $m_n\asymp n^{1/2}$ and the excess variance is of order $n^{-1/2}$.
\end{corollary}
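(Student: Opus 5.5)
I plan to split the excess sampling variance into two pieces. The first is the cost of spending the initial fraction $\alpha_n$ on simple random sampling. The second is the cost of running the adaptive rule at the reduced fraction $\rho_n$ on the remaining participants.

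Under equal allocation, the sampling contribution of the design is, up to $o_p(1/n)$ normalization terms,
\[
4\bigl[\alpha_n\cdot 0+(1-\alpha_n)L_{r_1}(\widehat\pi_1)\bigr].
\]
Initial-sample participants have $\pi_i=1$, so they contribute zero. The comparison value is $4L_{r_P}(\pi^*_{r_P})$ at fraction $\rho$. I would write the difference as
\[
(1-\alpha_n)\bigl[L_{r_1}(\widehat\pi_1)-L_{r_P}(\pi^*_{r_P,\rho_n})\bigr]
+\bigl[(1-\alpha_n)L_{r_P}(\pi^*_{r_P,\rho_n})-L_{r_P}(\pi^*_{r_P,\rho})\bigr].
\]
The first bracket is bounded by Theorem~\ref{thm:adaptive-bound}(iii), applied with the target fraction $\rho_n$ in place of $\rho$. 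Assumption \eqref{eq:learning-rate} then gives $O_p(m_n^{-\gamma})$. One point needs checking: the constants in \eqref{eq:regret1}–\eqref{eq:regret-quad} must stay bounded as $\rho_n\to\rho$. This holds for $C_\epsilon$. For \eqref{eq:regret-quad} it holds because $\rho_n$ stays bounded away from zero.

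The second bracket is deterministic. Since no truncation occurs near $\rho$, the untruncated oracle loss at fraction $s$ is
\[
L_{r_P}(\pi^*_{r_P,s})=\frac{K^2}{s}-\E r_P(X).
\]
This follows from Proposition~\ref{prop:optimal}: substitute $\pi=sK^{-1}\sqrt{r_P}$ into $L$. Writing $\rho_n=(\rho-\alpha_n)/(1-\alpha_n)$, the bracket becomes
\[
\frac{(1-\alpha_n)^2K^2}{\rho-\alpha_n}-(1-\alpha_n)\E r_P-\frac{K^2}{\rho}+\E r_P.
\]
I would Taylor expand this in $\alpha_n$ around zero. The linear coefficient is
\[
K^2\Bigl(-\frac{2}{\rho}+\frac{1}{\rho^2}\Bigr)+\E r_P.
\]
Using $\E r_P=\Var\sqrt{r_P}+K^2$, this equals
\[
K^2\frac{(1-\rho)^2}{\rho^2}+\Var\{\sqrt{r_P(X)}\}.
\]
Multiplying by $4$ gives the stated leading constant, and the remainder is $O(\alpha_n^2)$. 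This coefficient is nonnegative, which confirms the sign expected from the lower bound in Corollary~\ref{cor:repeated}.

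Adding the two pieces gives the bound $O_p(m_n/n+m_n^{-\gamma})$. Balancing $m_n/n\asymp m_n^{-\gamma}$ gives $m_n\asymp n^{1/(1+\gamma)}$ and excess variance of order $n^{-\gamma/(1+\gamma)}$.

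For the examples, the parametric rates plug into $\Lambda_1\le C_\epsilon\|\widehat r_1-r_1\|_1=O_p(m_n^{-1/2})$, while the prediction term is $O_p(m_n^{-1})$, so $\gamma=1/2$. When $r_P$ is bounded away from zero, $\widehat r_1\ge r_{\min}/2$ with probability tending to one. Then \eqref{eq:regret-quad} gives $O_p(m_n^{-1})$, so $\gamma=1$. This step also needs $\|\sqrt{\widehat r_1}-\sqrt{r_1}\|_2=O_p(m_n^{-1/2})$.

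The main obstacle is the first bracket. Theorem~\ref{thm:adaptive-bound} compares against the oracle at the fraction actually used, but $r_1$ depends on $\widehat f_1$, and the rate assumption is stated for the fixed $\rho$. I would handle this by noting that $\mathcal P_{\rho_n}$ differs from $\mathcal P_\rho$ only in the fraction, and that the bounds in Theorem~\ref{thm:adaptive-bound} hold uniformly in the fraction over a neighborhood where truncation is inactive. The $o_p(1)$ normalization terms of Corollary~\ref{cor:repeated} are absorbed once they are shown to be $O_p(n^{-1/2})$, which is smaller than $m_n/n+m_n^{-\gamma}$ at the optimal rate.
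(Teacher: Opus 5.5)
Your proposal is correct and follows the paper's own proof. It uses the same split into the learning term at the reduced fraction $\rho_n$, controlled by Theorem~\ref{thm:adaptive-bound} and \eqref{eq:learning-rate}, and the deterministic comparison $(1-\alpha_n)(K^2/\rho_n-\E r_P)-(K^2/\rho-\E r_P)$ expanded in $\alpha_n$, which gives the same leading constant, followed by the same balancing. Two side remarks are looser than needed. First, $\widehat r_1\geq r_{\min}/2$ does not follow from $L_p$ rates alone, but non-truncation of $\widehat\pi_1$ already gives $\sqrt{\widehat r_1}\geq\epsilon\,\E\sqrt{\widehat r_1}/\rho_n$, and hence a quadratic bound $\epsilon^{-1}\|\sqrt{\widehat r_1}-\sqrt{r_1}\|_2^2$. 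Second, the normalization terms need not be absorbed, since the excess is defined through population expectations given the fitted rule; your claim that an $O_p(n^{-1/2})$ term is dominated at the optimal rate holds only for $\gamma\leq1$.
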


The rate describes a tradeoff. In a particular trial, constants, truncation, and model complexity also matter. An initial sample that is too small leaves the fitted rule uncertain. An initial sample that stays a fixed fraction of a growing trial spends a nonvanishing share of the budget before the rule can adapt. The cost of the initial sample is positive even when $r_P$ is constant, because initial participants are measured with probability one instead of $\rho$. The exponent depends on which bound in Theorem~\ref{thm:adaptive-bound} applies, and the quadratic bound leads to a smaller initial sample than the linear bound. Earlier work on multiwave designs reported this tradeoff empirically \citep{McIsaacCook2015,ChenLumley2020}. Corollary~\ref{cor:initial-size} expresses it through the learning rate of the models that set the sampling probabilities.

\subsection{Cost of withholding treatment assignment}
\label{sec:blinding-cost}

The proposed procedure uses treatment assignment neither for the outcome model nor for the sampling probabilities. To measure the cost of this restriction, we compare it with hypothetical oracle designs that may use the randomization code. Let
\[
q_a(x)=\E(Y\mid A=a,X=x),
\qquad
\sigma_a^2(x)=\Var(Y\mid A=a,X=x),
\]
and write
\[
\eta(x)=\Prb(A=1\mid X=x),
\qquad
d(x)=q_1(x)-q_0(x),
\qquad
s(x)=\eta(x)\sigma_1^2(x)+\{1-\eta(x)\}\sigma_0^2(x).
\]
The pooled conditional mean is
\[
f_P(x)
=
\eta(x)q_1(x)+\{1-\eta(x)\}q_0(x),
\]
and its conditional residual variance is
\begin{equation}
r_P(x)
=
s(x)+\eta(x)\{1-\eta(x)\}d^2(x).
\label{eq:decomposition}
\end{equation}
The first term averages the residual variances of the two arms. The second appears because a pooled prediction does not use treatment assignment. If $X$ contains only baseline variables in a 1:1 randomized trial, $\eta(x)=1/2$, and with a constant treatment effect $\Delta$ the second term is $\Delta^2/4$. The pooled variance used in blinded sample size re-estimation is inflated by the same term \citep{GouldShih1992,KieserFriede2003}.

A second cost arises even if the arm-specific means $q_a$ were known. The best sampling probabilities differ between arms when $\sigma_1(x)$ and $\sigma_0(x)$ differ, and a blinded rule must use one probability for both. The next result uses two hypothetical oracle designs to separate these sources of loss. For closed-form expressions, we assume equal allocation, baseline $X$, and probabilities away from the truncation limits.

\begin{theorem}[Cost of keeping treatment assignment out of the sampling rule]
\label{thm:pooled-gap}
Assume that $A$ is independent of baseline $X$ with $\Prb(A=1)=1/2$, and that all oracle probabilities lie strictly between $\epsilon$ and $u$. All designs measure the same expected fraction $\rho$ of outcomes. When the sampling probabilities may depend on treatment, this means $\{\E\pi_1(X)+\E\pi_0(X)\}/2=\rho$. Let $V_T^*$ be the minimum asymptotic variance when treatment assignment may be used in both the outcome model and the sampling rule. Let $V_S^*$ be the minimum when the arm-specific means $q_a$ are used but one probability $\pi(X)$ is required for both arms. Let $V_P^*$ be the minimum when both the outcome model and the sampling rule are pooled over treatment, as in the proposed procedure.

With
\[
s(x)=\frac{\sigma_1^2(x)+\sigma_0^2(x)}{2},
\]
the loss from requiring one sampling probability is
\begin{equation}
G_\sigma(\rho)
=
V_S^*-V_T^*
=
\frac{
4\{\E\sqrt{s(X)}\}^2
-
\{\E\sigma_1(X)+\E\sigma_0(X)\}^2
}{\rho}
\geq0,
\label{eq:gsigma}
\end{equation}
with equality if and only if $\sigma_1(X)=\sigma_0(X)$ almost surely. If $\sigma_1$ and $\sigma_0$ are constants, $G_\sigma(\rho)=(\sigma_1-\sigma_0)^2/\rho$. The further loss from omitting treatment from the outcome prediction is
\begin{align}
G_d(\rho)
=
V_P^*-V_S^*
&=
4\left\{
\min_{\pi\in\mathcal P_\rho}L_{r_P}(\pi)
-
\min_{\pi\in\mathcal P_\rho}L_s(\pi)
\right\} \nonumber\\
&=\frac{4}{\rho}\left[\{\E\sqrt{r_P(X)}\}^2-\{\E\sqrt{s(X)}\}^2\right]-\E\{d^2(X)\}
\geq0,
\label{eq:gd}
\end{align}
where $r_P(x)=s(x)+d^2(x)/4$ under equal allocation. The loss $G_d$ also satisfies
\begin{equation}
G_d(\rho)\leq\E\left[\frac{1-\pi_s^*(X)}{\pi_s^*(X)}\,d^2(X)\right]\leq C_\epsilon\|d\|_2^2,
\label{eq:gd-bound}
\end{equation}
where $\pi_s^*$ is the oracle probability for $s$. If $s(x)\geq s_{\min}>0$, the first inequality in \eqref{eq:gd-bound} is an equality up to a term of order $\|d\|_\infty^4$. The total loss from withholding treatment assignment is
\begin{equation}
V_P^*-V_T^*
=
G_\sigma(\rho)+G_d(\rho).
\label{eq:gap-decomp}
\end{equation}
Under local alternatives with $\|d_n\|_2=O(n^{-1/2})$, $G_d(\rho)=O(n^{-1})$. If in addition $\sigma_1(X)=\sigma_0(X)$ almost surely, the pooled oracle and the oracle that uses treatment assignment have the same first-order variance.
\end{theorem}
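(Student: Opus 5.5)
The plan is to write each of the three oracle variances as $V_{\mathrm{full}}$ plus a sampling term, using the general form from Remark~\ref{rem:unequal}, and then minimize each sampling term in closed form by Cauchy--Schwarz. No truncation is active, so every minimizer is an untruncated Neyman rule.

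\textbf{Step 1: sampling terms.} Under $p_a=1/2$ and baseline $X$, the sampling term is $4\,\E[(1-\pi_A)/\pi_A\cdot\{Y-f_A(X)\}^2]$, where $f_A$ and $\pi_A$ may or may not depend on the arm. For design $T$, take $f_a=q_a$ and arm-specific $\pi_a$. Conditioning on $X$ and $A$ gives
\[
2\,\E\{\sigma_1^2/\pi_1+\sigma_0^2/\pi_0\}-4\E s(X).
\]
We minimize this subject to $\{\E\pi_1+\E\pi_0\}/2=\rho$. Cauchy--Schwarz, or a Lagrangian, gives $\pi_a\propto\sigma_a$ with a common constant. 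The minimum is $\{\E\sigma_1+\E\sigma_0\}^2/\rho-4\E s$.

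For design $S$, take $f_a=q_a$ and a single $\pi$. The term is $4L_s(\pi)$, and Proposition~\ref{prop:optimal} gives the minimum $4\{\E\sqrt s\}^2/\rho-4\E s$. Subtracting the two minima yields \eqref{eq:gsigma}. Nonnegativity and the equality case follow from $\sqrt{(a^2+b^2)/2}\ge(a+b)/2$, which is strict unless $a=b$. The constant case then reduces to simple algebra.

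\textbf{Step 2: pooled design.} With $f=f_P$ and a pooled $\pi$, the residual $Y-f_P$ has conditional second moment $r_P=s+d^2/4$ by \eqref{eq:decomposition}. The sampling term is therefore $4L_{r_P}(\pi)$.

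There is one subtlety. $V_{\mathrm{full}}$ is common to all designs, but we must check that the optimal pooled working prediction is indeed $f_P$. For a fixed $\pi$, the term $\E[(1-\pi)/\pi\,(Y-f)^2]$ is minimized pointwise in $x$ by $f=f_P$, and this holds for every $\pi$. Hence $V_P^*=V_{\mathrm{full}}+4\min L_{r_P}$. Applying Proposition~\ref{prop:optimal} gives
\[
G_d=\frac{4}{\rho}\bigl[(\E\sqrt{r_P})^2-(\E\sqrt s)^2\bigr]-4\E(r_P-s),
\]
and $4\E(r_P-s)=\E d^2$. This is \eqref{eq:gd}. Nonnegativity holds because $L_{r_P}\ge L_s$ pointwise in $\pi$. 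The decomposition \eqref{eq:gap-decomp} is then just a telescoping sum.

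\textbf{Step 3: bound \eqref{eq:gd-bound}.} Evaluate the pooled objective at the feasible rule $\pi_s^*$:
\[
\min L_{r_P}-L_s(\pi_s^*)\le L_{r_P}(\pi_s^*)-L_s(\pi_s^*)=\E\bigl[(1-\pi_s^*)/\pi_s^*\cdot d^2/4\bigr].
\]
Multiplying by 4 gives the first inequality, and $(1-\pi)/\pi\le C_\epsilon$ gives the second.

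For near-equality, use the closed form. Write $\sqrt{s+d^2/4}=\sqrt s+d^2/(8\sqrt s)+O(d^4/s^{3/2})$, which is valid when $s\ge s_{\min}$. Expanding $(\E\sqrt{r_P})^2$ gives $(\E\sqrt s)^2+\E\sqrt s\,\E[d^2/(4\sqrt s)]+O(\|d\|_\infty^4)$. Since $\pi_s^*=\rho\sqrt s/\E\sqrt s$, the first-order term in $G_d$ equals $\E[d^2/\pi_s^*]-\E d^2$. That is exactly the upper bound.

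The local-alternative statement follows because $G_d\le C_\epsilon\|d_n\|_2^2=O(n^{-1})$. When in addition $\sigma_1=\sigma_0$, we have $G_\sigma=0$, so the two oracles share the same first-order variance.

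\textbf{Main obstacle.} I expect the main obstacle to be Step 1's careful derivation of the arm-specific sampling term. We must confirm that the sum over arms correctly reduces to $2\E[\sigma_a^2/\pi_a]$ with constant $4\E s$ subtracted, so that the $-4\E s$ terms cancel across designs. We also need the constrained Cauchy--Schwarz optimum under the averaged budget constraint, with its equality condition.
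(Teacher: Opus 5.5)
Your proposal is correct and follows the paper's proof essentially step for step. It uses the same untruncated Neyman minima for the three designs, the pointwise inequality $\sigma_1+\sigma_0\le 2\sqrt{s}$, evaluation of the pooled objective at $\pi_s^*$ for \eqref{eq:gd-bound}, and the same expansion of $\sqrt{s+d^2/4}$ for the near-equality, while your added check that $f_P$ minimizes the sampling term for every fixed $\pi$ makes explicit a point the paper leaves implicit.
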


When $\sigma_1$ and $\sigma_0$ are constants, the oracle that uses treatment assignment is Neyman allocation between the arms, $\pi_a\propto\sigma_a$, and $G_\sigma=(\sigma_1-\sigma_0)^2/\rho$. For example, with $\sigma_1=1.5\sigma_0$, $\rho=0.4$, and conditional means that do not depend on $X$, $G_\sigma=0.625\sigma_0^2$, about 4\% of $V_T^*=15.6\sigma_0^2$. The loss from a common sampling probability is therefore small unless the residual standard deviations differ markedly between arms.

\begin{corollary}[Overall comparison with an oracle that uses treatment assignment]
\label{cor:combined-gap}
Under the conditions of Corollary~\ref{cor:repeated} and Theorem~\ref{thm:pooled-gap}, let
$\overline V_n^{\mathrm{ad}}
=
V_{\mathrm{full}}+\overline V_{R,n}.$
Then
\begin{align}
0\leq
\overline V_n^{\mathrm{ad}}-V_T^*
\leq{}&
4
\sum_{k=1}^{N_n}w_{k,n}
\left[
\Lambda_k+C_\epsilon\|\widehat f_k-f_P\|_2^2
\right] \nonumber\\
&+G_\sigma(\rho)+G_d(\rho)+o_p(1).
\label{eq:combined-gap}
\end{align}
If the weighted learning errors converge to zero, $\|d_n\|_2=O(n^{-1/2})$, and $\sigma_1(X)=\sigma_0(X)$ almost surely, the adaptive design and the oracle that uses treatment assignment have the same first-order variance.
\end{corollary}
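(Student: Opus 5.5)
The plan is to write $\overline V_n^{\mathrm{ad}}-V_T^*$ as a telescoping sum over the three oracle benchmarks of Theorem~\ref{thm:pooled-gap}. Under equal allocation, \eqref{eq:variance} gives $V_P^*=V_{\mathrm{full}}+4\min_{\pi\in\mathcal P_\rho}L_{r_P}(\pi)=V_{\mathrm{full}}+V_R^*$. Since $\overline V_n^{\mathrm{ad}}=V_{\mathrm{full}}+\overline V_{R,n}$, this yields
\[
\overline V_n^{\mathrm{ad}}-V_T^*
=
(\overline V_{R,n}-V_R^*)+(V_P^*-V_S^*)+(V_S^*-V_T^*).
\]
The first difference is controlled by Corollary~\ref{cor:repeated}. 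That corollary gives both the lower bound $0$ and the upper bound by the weighted learning errors plus $o_p(1)$. The second and third differences equal $G_d(\rho)$ and $G_\sigma(\rho)$ by \eqref{eq:gd}, \eqref{eq:gsigma} and \eqref{eq:gap-decomp}. Adding the three pieces gives the upper bound in \eqref{eq:combined-gap} immediately.

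For the lower bound, I combine three facts: $\overline V_{R,n}-V_R^*\ge0$ from Corollary~\ref{cor:repeated}, and $G_d,G_\sigma\ge0$ from Theorem~\ref{thm:pooled-gap}. Their sum is nonnegative. One point needs checking: both $V_{\mathrm{full}}$ terms must refer to the same complete-data variance, $2\Var(Y\mid A=1)+2\Var(Y\mid A=0)$, since the oracle designs differ only in their sampling part. I would confirm this by noting that, for every estimator of the form \eqref{eq:estimator}, the decomposition \eqref{eq:decomp-estimator} splits the variance into $V_Y$ plus a sampling term. This holds whatever prediction is used, including arm-specific predictions. Under equal allocation, $V_Y=V_{\mathrm{full}}$.

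For the final statement, each of the three terms must vanish to first order. The weighted learning errors converge to zero by assumption, which kills the first term. If $\|d_n\|_2=O(n^{-1/2})$, then $G_d(\rho)=O(n^{-1})$ by Theorem~\ref{thm:pooled-gap}. If $\sigma_1=\sigma_0$ almost surely, then $G_\sigma(\rho)=0$ by the equality case of \eqref{eq:gsigma}. The right-hand side of \eqref{eq:combined-gap} is then $o_p(1)$, and so $\overline V_n^{\mathrm{ad}}\to V_T^*$.

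The main subtlety I expect is the local-alternative regime. There $d$, and hence $r_P$, $f_P$ and the oracles, depend on $n$. Corollary~\ref{cor:repeated} therefore has to be read with $n$-dependent targets, with its $o_p(1)$ term uniform along the sequence. I would handle this by observing that the proof of Theorem~\ref{thm:adaptive-bound} is a deterministic inequality, conditional on $\mathcal G_k$, for whatever $r_P$ is in force. Uniformity then only requires that the moments entering the normalization error stay bounded, which follows from $\|d_n\|_\infty$ being bounded.
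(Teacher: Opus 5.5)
Your argument is correct and is the one the paper intends. The paper gives no separate proof of this corollary: it follows by writing $V_P^*=V_{\mathrm{full}}+V_R^*$, adding the bound of Corollary~\ref{cor:repeated} to the decomposition $V_P^*-V_T^*=G_\sigma(\rho)+G_d(\rho)$ from Theorem~\ref{thm:pooled-gap}, and using nonnegativity of each piece for the lower bound. Your remark on uniformity under local alternatives goes beyond what the paper addresses, but boundedness of $\|d_n\|_\infty$ is an extra assumption rather than a consequence of $\|d_n\|_2=O(n^{-1/2})$.
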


The decomposition separates two consequences of withholding treatment assignment. The term $G_\sigma$ arises because one sampling rule cannot reproduce different optimal probabilities in the two arms when their residual variances differ. It does not depend on $d$ and can remain under the null. The term $G_d$ arises because a pooled outcome model absorbs part of the conditional treatment effect into its residual variance. It is of order $\|d\|_2^2$ and is negligible when the conditional treatment effect is small. The learning term in \eqref{eq:combined-gap} decreases as more outcomes are measured, and the two blinding terms do not, because they come from the decision to withhold treatment assignment.

\subsection{Summary of the procedure}
\label{sec:practice}

\begin{algorithm}[t]
\caption{Blinded adaptive sampling of a costly outcome}
\label{alg:design}
\begin{enumerate}[label=\arabic*.,leftmargin=2em,itemsep=2pt]
\item Before sampling starts, prespecify the auxiliary variables, the expected fraction $\rho$ of outcomes to be measured, the initial fraction $\alpha$, the probability bounds $[\epsilon,u]$, the working models, and the update times.
\item Select an initial simple random sample $S_0$ of $m=\alpha n$ participants and measure their outcomes.
\item Without the treatment code, fit the working prediction $\widehat f$ and the residual variance model $\widehat r$ to the measured outcomes. For a binary outcome, $\widehat r=\widehat f(1-\widehat f)$ may be used.
\item For the next group of participants, set $\pi_i=\min[u,\max\{\epsilon,c\,\widehat r(X_i)^{1/2}\}]$, with $c$ chosen so that the probabilities have the prespecified average. Draw $R_i$ from a Bernoulli$(\pi_i)$ distribution using random numbers generated independently of the trial data. Record $\pi_i$ and the model version.
\item At each prespecified update, refit $\widehat f$ and $\widehat r$ to all measured outcomes, weighting each by the inverse of its sampling probability, and return to step 4.
\item After database lock and unblinding, compute $U_i$ from \eqref{eq:corrected}, the estimate \eqref{eq:estimator}, the standard error \eqref{eq:se}, and a Wald interval.
\end{enumerate}
\end{algorithm}

Algorithm~\ref{alg:design} summarizes the procedure. A single update after the initial sample is attractive in a confirmatory trial because it is simple to implement and audit. More updates help when the added outcomes improve the fitted rule (Corollary~\ref{cor:repeated}). Proposition~\ref{prop:saving} gives the saving that an oracle rule would achieve, and the learning loss in Theorem~\ref{thm:adaptive-bound} reduces it. In our simulations with 40\% of outcomes measured, the adaptive design achieved between one third and two thirds of the oracle saving (Section~\ref{sec:sim-saving}). An initial sample of 5\% to 10\% worked well for trials of 2000 to 8000 participants, and 10\% to 20\% for a trial of 272 participants (Sections~\ref{sec:simulation} and \ref{sec:toenail}). For a binary outcome, the variance $\widehat p(1-\widehat p)$ implied by the outcome model was more stable than a separate residual variance model. If it simplifies operations, the sampling probabilities can be rounded to a few levels, for example three strata of predicted difficulty. Theorems~\ref{thm:unbiased} and \ref{thm:clt} apply to any predictable probabilities, so rounding does not affect validity. The efficiency results of Section~\ref{sec:choosing} assume the optimized probabilities.

\section{Simulation study}
\label{sec:simulation}

\subsection{Settings and designs}

Each simulated trial had $n=2000$ participants, exactly half assigned to each arm. The auxiliary variables were independent, with $X_1,X_2\sim N(0,1)$, $X_3\sim\mathrm{Bernoulli}(0.4)$, and $X_4\sim\mathrm{Uniform}(-1,1)$. Outcomes were generated as $Y=q_A(X)+\sigma_A(X)\zeta$, where $\zeta$ is a standardized $t_6$ variable independent of $(A,X)$. Let
\[
q(X)=0.75X_1+0.55(X_2^2-1)+0.45X_3+0.35\sin(\pi X_4).
\]
We used three settings.
\begin{enumerate}[label=\arabic*.,leftmargin=2em]
\item Heteroskedastic outcome. $q_0=q$, $q_1=q+0.30$, and $\sigma_0=\sigma_1=\exp(-0.4+0.7X_1+0.5X_3)$.
\item Heterogeneous treatment effects. $q_0=q$, $q_1=q+0.30+0.5X_1$, $\sigma_0=\exp(-0.4+0.7X_1+0.5X_3)$, and $\sigma_1=\exp(-0.4-0.7X_1+0.5X_3)$.
\item Misspecified working models. $q_0=q+0.6X_1X_2+0.5\{\cos(\pi X_1)-e^{-\pi^2/2}\}$, $q_1=q_0+0.30$, and $\sigma_0=\sigma_1=\exp\{-0.4+0.5X_1+0.6\tanh(X_1X_2)\}$.
\end{enumerate}
Since $\E\cos(\pi X_1)=e^{-\pi^2/2}$, the marginal treatment effect is 0.30 in every setting. In setting 2 the residual standard deviations of the two arms change with $X_1$ in opposite directions, so $G_\sigma>0$, and the conditional effect varies with $X_1$, so $G_d$ is larger than in the other settings. In settings 1 and 3, $G_\sigma=0$.

The outcome was first measured in a simple random sample of 10\% of participants ($m=200$). The expected fraction measured was 40\% in total, so the remaining participants were sampled with average probability $1/3$, and probabilities were restricted to $[0.05,0.90]$. The working outcome model was a ridge regression of $Y$ on standardized $X_1,\ldots,X_4,X_1^2,X_2^2$ with an unpenalized intercept. The penalty was chosen by exact leave-one-out cross-validation over 13 values from $10^{-3}$ to $10^3$. The working model omits the sine term, and in setting 3 also the interaction and cosine terms. For the residual variance model, the measured outcomes were split at random into two halves. The outcome model was fitted to each half, and residuals were computed on the other half. A ridge regression of the log squared residuals on the same variables, tuned in the same way, gave $\widehat r$ as the exponential of its fitted values. To avoid taking the logarithm of zero, $10^{-8}$ times the sample variance of $Y$ was added to the squared residuals. Sampling probabilities were proportional to $\widehat r^{1/2}$, with the constant $c$ in \eqref{eq:optimal} found by bisection so that the probabilities averaged $1/3$ over the remaining participants. Because the probabilities are normalized, a constant multiplicative bias in $\widehat r$, such as the bias from modeling log squared residuals, does not change the rule. In the design with two updates, the remaining participants were split at random into two halves. The first half was sampled with the rule from the initial sample. Both models were then refitted to all measured outcomes, weighted by the inverse of their sampling probabilities, and the refitted prediction and rule were used for the second half with the same average probability.

We compared these designs with measuring every outcome and with simple random sampling of the remaining participants, using the same fitted prediction. We also used four oracle rules based on the true model, each with the same initial sample and budget.
\begin{enumerate}[label=(\alph*),leftmargin=2em]
\item Oracle probabilities for the fitted prediction, $\pi\propto\{r_P(x)+(\widehat f_1(x)-f_P(x))^2\}^{1/2}$, which is $\pi^*_{r_1}$ in Theorem~\ref{thm:adaptive-bound}.
\item The pooled oracle, with $f_P$ in the estimator and $\pi\propto r_P^{1/2}$.
\item Arm-specific means $q_A$ in the estimator, with the common probability $\pi\propto s^{1/2}$.
\item Arm-specific means with arm-specific probabilities $\pi_a\propto\sigma_a$.
\end{enumerate}
Rules (b), (c), and (d) correspond to $V_P^*$, $V_S^*$, and $V_T^*$ in Theorem~\ref{thm:pooled-gap}. Moving from the adaptive design to rules (a), (b), (c), and (d) in turn removes the error of the residual variance model, the prediction error, the term $G_d$, and the term $G_\sigma$.

Each setting used 5000 replications. All designs were applied to the same simulated trials and used the same uniform random numbers for the sampling draws, so differences between designs have small Monte Carlo error. Standard errors were computed from \eqref{eq:se}. Relative efficiency was the variance of $\widehat\Delta$ under simple random sampling divided by its variance under the design, and its Monte Carlo standard error was obtained by resampling replications. The Monte Carlo standard error of coverage is about 0.003.

\subsection{Results with a 10\% initial sample}

Table~\ref{tab:simulation} shows negligible bias and coverage between 0.947 and 0.956 for every design. The measured fraction averaged 0.400, with a standard deviation of 0.009 to 0.010, about 20 participants. With one update, adaptive sampling had relative efficiency 1.34, 1.13, and 1.12 in the three settings. A second update raised these values to 1.40, 1.16, and 1.13.

Table~\ref{tab:decomp} splits the difference between the adaptive design and the oracle that uses treatment assignment into the terms of Corollary~\ref{cor:combined-gap}, on the scale of $n$ times the variance. In setting 1 the two learning terms account for most of the gap. Learning the residual variance cost 1.29 and prediction error cost 0.91, compared with 0.37 for $G_d$ and 0 for $G_\sigma$. Withholding treatment assignment therefore cost little, and the second update recovered most of the residual variance term, with relative efficiency 1.40 against 1.42 for the oracle probabilities. In setting 2 the largest term was $G_\sigma=3.19$, followed by learning the residual variance (2.34), prediction error (0.74), and $G_d=0.45$. In setting 3 the prediction term (3.64) was the largest, because the misspecified outcome model leaves $(\widehat f-f_P)^2$ in the residual (Theorem~\ref{thm:adaptive-bound}(iii)). Refitting the same model at a second update barely changed the efficiency. The simulated values of $G_d$ and $G_\sigma$ agree with the values computed from Theorem~\ref{thm:pooled-gap} for this design, and the largest difference, for $G_\sigma$ in setting 2, is 1.7 Monte Carlo standard errors.

\begin{table}[t]
\centering
\caption{Simulation results with 5000 replications per setting, $n=2000$, a 10\% initial sample, and 40\% of outcomes measured in expectation. Coverage refers to nominal 95\% Wald intervals, and its Monte Carlo standard error is about 0.003. RE is the relative efficiency against simple random sampling with the same initial sample, with its Monte Carlo standard error in parentheses. Rows below the rule in each setting are oracle designs that use the true model.}
\label{tab:simulation}
\small
\setlength{\tabcolsep}{3.3pt}
\begin{tabular}{lrrrrr}
\toprule
Design & Bias & Empirical SD & Mean SE & Coverage & RE\\
\midrule
\multicolumn{6}{l}{\textit{Setting 1 (heteroskedastic outcome)}}\\
\quad All outcomes measured & 0.000 & 0.080 & 0.081 & 0.952 & 2.27 (0.05)\\
\quad Simple random sampling & 0.002 & 0.121 & 0.120 & 0.952 & 1.00\\
\quad Adaptive, one update & 0.002 & 0.105 & 0.104 & 0.950 & 1.34 (0.02)\\
\quad Adaptive, two updates & 0.001 & 0.102 & 0.102 & 0.953 & 1.40 (0.02)\\
\cmidrule(l){1-6}
\quad Oracle probabilities, fitted prediction & 0.001 & 0.101 & 0.101 & 0.951 & 1.42 (0.03)\\
\quad Pooled oracle ($V_P^*$) & 0.001 & 0.099 & 0.099 & 0.949 & 1.49 (0.03)\\
\quad Arm-specific means, common probability ($V_S^*$) & 0.001 & 0.098 & 0.098 & 0.952 & 1.51 (0.03)\\
\quad Arm-specific means and probabilities ($V_T^*$) & 0.001 & 0.098 & 0.098 & 0.952 & 1.51 (0.03)\\
\addlinespace
\multicolumn{6}{l}{\textit{Setting 2 (heterogeneous treatment effects)}}\\
\quad All outcomes measured & $-$0.002 & 0.087 & 0.087 & 0.954 & 2.10 (0.04)\\
\quad Simple random sampling & $-$0.001 & 0.126 & 0.125 & 0.949 & 1.00\\
\quad Adaptive, one update & $-$0.001 & 0.118 & 0.118 & 0.949 & 1.13 (0.02)\\
\quad Adaptive, two updates & 0.000 & 0.117 & 0.118 & 0.954 & 1.16 (0.02)\\
\cmidrule(l){1-6}
\quad Oracle probabilities, fitted prediction & $-$0.001 & 0.113 & 0.114 & 0.955 & 1.24 (0.02)\\
\quad Pooled oracle ($V_P^*$) & $-$0.001 & 0.111 & 0.112 & 0.956 & 1.27 (0.02)\\
\quad Arm-specific means, common probability ($V_S^*$) & $-$0.001 & 0.110 & 0.111 & 0.955 & 1.30 (0.02)\\
\quad Arm-specific means and probabilities ($V_T^*$) & $-$0.001 & 0.103 & 0.103 & 0.951 & 1.49 (0.03)\\
\addlinespace
\multicolumn{6}{l}{\textit{Setting 3 (misspecified working models)}}\\
\quad All outcomes measured & 0.001 & 0.074 & 0.074 & 0.948 & 2.05 (0.04)\\
\quad Simple random sampling & 0.001 & 0.106 & 0.105 & 0.952 & 1.00\\
\quad Adaptive, one update & 0.002 & 0.100 & 0.100 & 0.951 & 1.12 (0.01)\\
\quad Adaptive, two updates & 0.002 & 0.100 & 0.099 & 0.951 & 1.13 (0.02)\\
\cmidrule(l){1-6}
\quad Oracle probabilities, fitted prediction & 0.002 & 0.096 & 0.095 & 0.947 & 1.21 (0.02)\\
\quad Pooled oracle ($V_P^*$) & 0.002 & 0.086 & 0.086 & 0.948 & 1.51 (0.03)\\
\quad Arm-specific means, common probability ($V_S^*$) & 0.002 & 0.085 & 0.085 & 0.947 & 1.54 (0.03)\\
\quad Arm-specific means and probabilities ($V_T^*$) & 0.002 & 0.085 & 0.085 & 0.947 & 1.54 (0.03)\\
\bottomrule
\end{tabular}
\end{table}

\begin{table}[t]
\centering
\caption{Decomposition of the variance gap between one-update adaptive sampling and the oracle that uses treatment assignment, in units of $n$ times the variance of $\widehat\Delta$, for the designs in Table~\ref{tab:simulation}. The residual variance term is the adaptive design minus the rule with oracle probabilities for the fitted prediction. The prediction term is that rule minus the pooled oracle. $G_d$ is the pooled oracle minus the $V_S^*$ rule, and $G_\sigma$ is the $V_S^*$ rule minus the $V_T^*$ rule, which coincide when $\sigma_1=\sigma_0$. Monte Carlo standard errors are in parentheses, and values computed from Theorem~\ref{thm:pooled-gap} for this design are in brackets.}
\label{tab:decomp}
\small
\setlength{\tabcolsep}{3.5pt}
\begin{tabular}{lrrrrrr}
\toprule
 & \multicolumn{2}{c}{$n\times$ variance} & \multicolumn{4}{c}{Terms of the gap}\\
\cmidrule(lr){2-3}\cmidrule(l){4-7}
Setting & \begin{tabular}[b]{@{}c@{}}Simple\\random\end{tabular} & Adaptive & \begin{tabular}[b]{@{}c@{}}Residual\\variance\end{tabular} & Prediction & $G_d$ & $G_\sigma$\\
\midrule
1 (heteroskedastic) & 29.2 & 21.9 & 1.29 (0.22) & 0.91 (0.16) & 0.37 (0.09) [0.30] & 0\\
2 (heterogeneous effects) & 31.5 & 27.9 & 2.34 (0.32) & 0.74 (0.16) & 0.45 (0.12) [0.50] & 3.19 (0.31) [3.70]\\
3 (misspecified) & 22.5 & 20.1 & 1.60 (0.23) & 3.64 (0.23) & 0.26 (0.07) [0.24] & 0\\
\bottomrule
\end{tabular}
\end{table}

\subsection{Measurements saved}
\label{sec:sim-saving}

To express the gains as numbers of measurements, we repeated the one-update design with expected fractions $\rho$ of 20\%, 30\%, 40\%, and 60\%, using 1000 replications for each value. To reduce Monte Carlo error, each replication recorded the exact variance of $\widehat\Delta$ due to outcome sampling, given the trial data and the fitted rule,
\[
n^{-2}\sum_{i\in\mathcal M}h_{i,n}^2\frac{1-\pi_i}{\pi_i}\{Y_i-\widehat f_1(X_i)\}^2,
\]
instead of drawing $R$. Its mean over replications is the sampling part of the variance. Given the trial data and the fitted prediction, the sampling variance of simple random sampling is proportional to $1/\rho_n-1$. We could therefore compute the fraction that simple random sampling would need to match the adaptive design.

Table~\ref{tab:saving} shows the results. Adaptive sampling saved 20\% to 26\% of the measurements in setting 1, 7\% to 10\% in setting 2, and 8\% to 11\% in setting 3. The total savings were smallest at $\rho=20\%$, where the fixed 10\% initial sample is half of the budget. Among the participants outside the initial sample, the saving was nearly constant for $\rho$ up to 40\%, at about 33\%, 13\%, and 14\% in the three settings, and it was smaller at $\rho=60\%$. For the pooled residual variance, $\kappa_{f_P}=0.84$, 0.55, and 0.68 in the three settings, so an oracle without truncation or initial sample would save 41\%, 23\%, and 32\% (Proposition~\ref{prop:saving}). At $\rho=40\%$ the adaptive design achieved between one third and two thirds of these savings, and the order of the settings matched the order of $\kappa_{f_P}$.

\begin{table}[t]
\centering
\caption{Expected percentage of outcomes that simple random sampling must measure to match the precision of one-update adaptive sampling, with the percentage of measurements saved by adaptive sampling in parentheses. Results are for $n=2000$ and a 10\% initial sample, with 1000 replications per cell. Monte Carlo standard errors are at most 0.12 percentage points for the percentages measured and 0.25 for the savings.}
\label{tab:saving}
\small
\begin{tabular}{lrrrr}
\toprule
 & \multicolumn{4}{c}{Expected percentage measured by adaptive sampling, $100\rho$}\\
\cmidrule(l){2-5}
Setting & 20 & 30 & 40 & 60\\
\midrule
1 (heteroskedastic) & 24.9 (19.6) & 39.8 (24.7) & 54.1 (26.1) & 76.3 (21.4)\\
2 (heterogeneous effects) & 21.4 (6.7) & 33.0 (9.2) & 44.3 (9.7) & 66.4 (9.7)\\
3 (misspecified) & 21.7 (7.8) & 33.4 (10.2) & 45.0 (11.1) & 67.3 (10.9)\\
\bottomrule
\end{tabular}
\end{table}

\subsection{Size of the initial sample}

We also repeated the one-update design with $\rho=40\%$, initial fractions of 2.5\%, 5\%, 10\%, 20\%, and 30\%, and $n=500$, 2000, and 8000, using 1000 replications for each combination and the same exact sampling variance. Figure~\ref{fig:initial} shows a U-shaped curve in every setting. With a very small initial sample the rule is poorly estimated. At $n=500$ with a 2.5\% initial sample (12 outcomes), adaptive sampling was less efficient than simple random sampling in all three settings (Table~\ref{tab:grid}). With a large initial sample, much of the budget is spent before the rule is available. The best fraction fell as $n$ grew. It was 10\% at $n=500$, 5\% or 10\% at $n=2000$, and 5\% at $n=8000$. In setting 1, the excess over the pooled oracle at the best fraction was 8.7, 3.5, and 1.6 for $n=500$, 2000, and 8000. It fell by a factor of 2.5 and then 2.1 for each fourfold increase in $n$, close to the factor of 2 implied by the $n^{-1/2}$ rate in Corollary~\ref{cor:initial-size} when the quadratic bound applies. In setting 3 it fell more slowly (9.2, 5.9, and 4.3), because the misspecified outcome model does not converge to $f_P$ and the prediction term in \eqref{eq:regret2} does not vanish.

\begin{figure}[t]
\centering
\includegraphics[width=\textwidth]{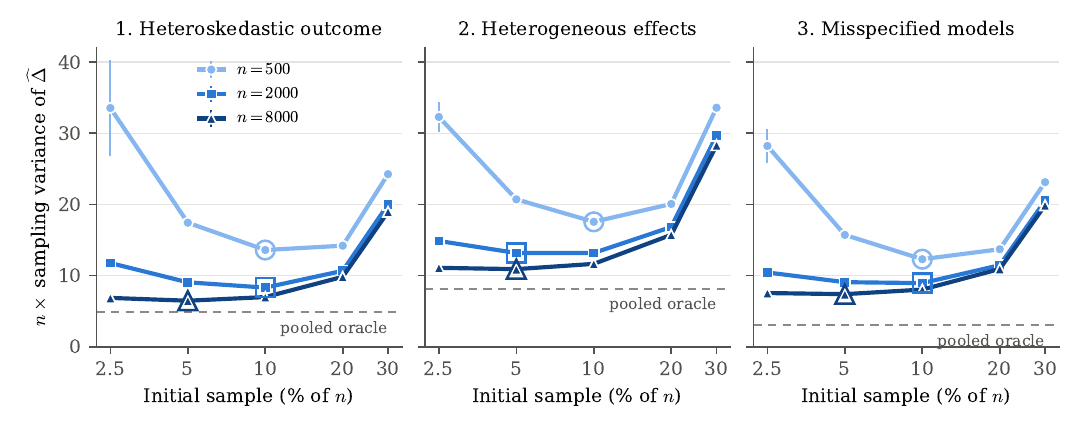}
\caption{Sampling part of the variance of $\widehat\Delta$, multiplied by $n$, for one-update adaptive sampling with 40\% of outcomes measured in expectation, by initial fraction and $n$, with 1000 replications per point. The dashed line is the pooled oracle with the whole budget and no initial sample. Large open symbols mark the best initial fraction for each $n$. Vertical bars are 95\% Monte Carlo intervals and are mostly hidden by the symbols. Values for simple random sampling are in Table~\ref{tab:grid}.}
\label{fig:initial}
\end{figure}

\FloatBarrier
\section{Illustration with a randomized antifungal trial}
\label{sec:toenail}

We used publicly available longitudinal data from a randomized, double-blind comparison of oral terbinafine and itraconazole for toenail onychomycosis \citep{DeBackerEtAl1998}, distributed as the \texttt{toenail} data in the R package HSAUR3 \citep{HothornEveritt2014}. The data contain 1908 visits from 294 participants. We restricted the illustration to the 272 participants with an observed outcome at visit 4, of whom 140 were assigned to terbinafine and 132 to itraconazole. The outcome was moderate or severe onycholysis at visit 4. The auxiliary variables were onycholysis status at visits 1, 2, and 3, indicators that an earlier visit was missing, and the actual times of visits 2 and 3. A missing status was coded 0 together with its indicator, and a missing visit time was replaced by the mean of the observed times at that visit. The auxiliary variables were standardized using the whole cohort, since they are available for everyone. Treatment assignment was not used in the prediction or sampling models.

The aim was to study outcome sampling, and we did not reanalyze the efficacy of the treatments. We treated the visit 4 outcomes of these 272 participants as a fixed complete cohort. The complete-data event rates were 20.7\% for terbinafine and 22.0\% for itraconazole, giving a risk difference of $-0.0126$. In each of 20,000 repetitions, an initial simple random sample of 54 participants (20\%) was selected, and 50\% of the visit 4 outcomes were measured in expectation. The remaining 218 participants were therefore sampled with average probability $82/218=0.376$. A ridge linear probability model was fitted to the initial sample using the auxiliary variables and no treatment information, with the penalty chosen by exact leave-one-out cross-validation within the initial sample, as in Section~\ref{sec:simulation}. Predicted probabilities $\widehat p(X)$ were truncated to $[0.02,0.98]$. For the adaptive design, the remaining sampling probabilities were proportional to $\sqrt{\widehat p(X)\{1-\widehat p(X)\}}$ and restricted to $[0.05,0.95]$. For a binary outcome this is the residual standard deviation implied by the fitted model when the model is well calibrated, so no separate residual variance model was fitted. The simple random design used the same fitted prediction with a constant probability for the remaining participants. As a benchmark that is not feasible in practice, we also fitted the same model to all 272 outcomes and used it both in the estimator and in the sampling rule. This benchmark shows how much is lost by learning the rule from 54 outcomes.

Because the cohort was held fixed, the standard error in this resampling study estimates only the variation due to outcome sampling. For arm $a$, its estimated contribution was
\[
\widehat V_a
=
\frac{1}{n_a^2}
\sum_{i\in\mathcal M:A_i=a}
\frac{R_i\{1-\pi_i\}}{\pi_i^2}
\{Y_i-\widehat f(X_i)\}^2,
\]
which is an unbiased estimate of the sampling variance given the initial sample and the fitted rule.

\begin{table}[t]
\centering
\caption{Resampling results for the antifungal trial with 20,000 repetitions. The target is the complete-data risk difference, $-0.0126$, among the 272 participants with a visit 4 outcome. RE is the relative efficiency against simple random sampling with the same initial sample, with its Monte Carlo standard error in parentheses. The Monte Carlo standard error of coverage is about 0.0015. The full-cohort fit uses all 272 outcomes to build the prediction and the sampling rule.}
\label{tab:toenail}
\small
\begin{tabular}{lrrrrrr}
\toprule
Design & \begin{tabular}[b]{@{}c@{}}Mean\\estimate\end{tabular} & \begin{tabular}[b]{@{}c@{}}Empirical\\SD\end{tabular} & \begin{tabular}[b]{@{}c@{}}Mean\\SE\end{tabular} & Coverage & \begin{tabular}[b]{@{}c@{}}Measured\\fraction\end{tabular} & RE\\
\midrule
Simple random sampling & $-$0.0122 & 0.0405 & 0.0405 & 0.948 & 0.500 & 1.00\\
Adaptive sampling & $-$0.0123 & 0.0315 & 0.0304 & 0.949 & 0.500 & 1.65 (0.02)\\
Full-cohort fit (not feasible) & $-$0.0124 & 0.0235 & 0.0227 & 0.939 & 0.500 & 2.97 (0.04)\\
\bottomrule
\end{tabular}
\end{table}

Both designs reproduced the complete-data risk difference, and the mean estimates were within 1.1 Monte Carlo standard errors of it (Table~\ref{tab:toenail}). Adaptive sampling reduced the empirical standard deviation from 0.0405 to 0.0315. Its relative efficiency was 1.65, a 39\% reduction in sampling variance. The mean estimated standard errors were close to the empirical standard deviations, and coverage was 0.949 for adaptive sampling and 0.948 for simple random sampling. Both designs measured the visit 4 outcome in half of the participants on average. To match the precision of adaptive sampling with 136 measured outcomes, simple random sampling would need 163 (59.8\% of the cohort, Monte Carlo standard error 0.2\%), so adaptive sampling saved about 27 measurements. For the full-cohort fit, $\kappa_f=0.56$, and Proposition~\ref{prop:saving} gives an oracle saving of 24\%, compared with the 16\% achieved. The infeasible benchmark had relative efficiency 2.97, so the adaptive design achieved about 60\% of the variance reduction available to a rule built from the complete data.

\begin{table}[t]
\centering
\caption{Sensitivity of the antifungal trial results to the analysis choices, with 5000 repetitions unless stated otherwise. SD is the empirical standard deviation, SE the mean estimated standard error, and RE the relative efficiency against simple random sampling with the same initial sample, with its Monte Carlo standard error in parentheses. All other choices are as in the primary analysis.}
\label{tab:toenail-sens}
\small
\setlength{\tabcolsep}{4pt}
\begin{tabular}{lrrrrr}
\toprule
 & Simple random & \multicolumn{4}{c}{Adaptive}\\
\cmidrule(lr){2-2}\cmidrule(l){3-6}
Variation & SD & SD & SE & Coverage & RE\\
\midrule
Primary analysis (20,000 repetitions) & 0.0405 & 0.0315 & 0.0304 & 0.949 & 1.65 (0.02)\\
$\widehat p$ truncated to $[0.01,0.99]$ & 0.0412 & 0.0325 & 0.0302 & 0.946 & 1.61 (0.04)\\
$\widehat p$ truncated to $[0.05,0.95]$ & 0.0414 & 0.0334 & 0.0324 & 0.942 & 1.54 (0.03)\\
Penalized logistic regression (2000 repetitions) & 0.0419 & 0.0325 & 0.0315 & 0.940 & 1.66 (0.06)\\
Separate residual variance model & 0.0409 & 0.0348 & 0.0288 & 0.944 & 1.38 (0.05)\\
Initial sample 10\% & 0.0399 & 0.0322 & 0.0304 & 0.948 & 1.54 (0.03)\\
Initial sample 30\% & 0.0466 & 0.0372 & 0.0352 & 0.950 & 1.57 (0.03)\\
\bottomrule
\end{tabular}
\end{table}

Table~\ref{tab:toenail-sens} shows how the results depend on the analysis choices. Changing the truncation of $\widehat p$ or using penalized logistic regression, with its penalty chosen by 5-fold cross-validation, gave relative efficiencies between 1.54 and 1.66. Fitting a separate residual variance model as in Section~\ref{sec:simulation} gave 1.38, and its standard errors were too small (0.0288 against an empirical standard deviation of 0.0348). With 54 binary outcomes the model for log squared residuals is noisy, and the variance $\widehat p(1-\widehat p)$ implied by the outcome model is the better choice. With initial fractions of 10\%, 20\%, and 30\%, the adaptive estimator had standard deviations 0.0322, 0.0315, and 0.0372. The difference between 10\% and 20\% is within about two Monte Carlo standard errors, and 30\% was clearly worse. For this small trial the best fraction appears to be at least as large as in the simulations with $n=500$, as Figure~\ref{fig:initial} suggests.

\begin{figure}[t]
\centering
\includegraphics[width=\textwidth]{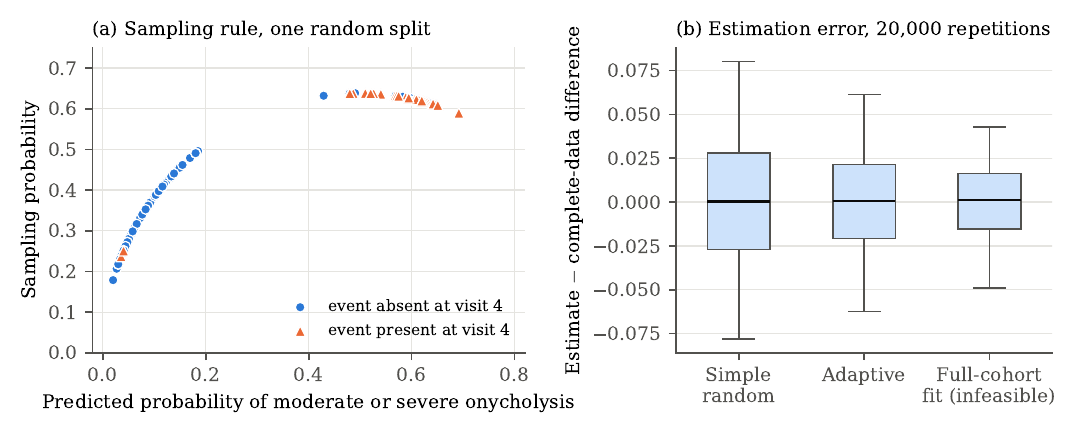}
\caption{Resampling analysis of the antifungal trial. Panel (a) shows the predicted visit 4 event probability and the sampling probability for the 218 participants outside the initial sample in one random split, with triangles for participants who had the event. Panel (b) shows the estimate minus the complete-data risk difference over 20,000 repetitions. Boxes show quartiles, and whiskers show the 2.5th and 97.5th percentiles.}
\label{fig:toenail}
\end{figure}

In the split shown in Figure~\ref{fig:toenail}(a), the area under the receiver operating characteristic curve in the remaining participants was 0.92, and the adaptive sampling probabilities ranged from 0.18 to 0.64. Most participants had predicted probabilities below about 0.07 or near 0.6, because onycholysis status at visit 3 largely determines status at visit 4. The largest sampling probabilities went to predictions near 0.5. The example is retrospective and serves only to study the sampling procedure. It does not address participants whose visit 4 outcome was missing in the original trial, and it does not provide a new clinical comparison of the two treatments.

\FloatBarrier
\section{Discussion}
\label{sec:discussion}
We studied adaptive probability sampling of a costly outcome in a blinded trial, in which the statistician who sets the sampling probabilities does not see treatment assignments. An initial random sample is used to learn where the outcome is hard to predict, later outcomes are sampled with probabilities that depend on the auxiliary variables, and the arms are compared after unblinding with the recorded probabilities. Validity rests on the recorded probabilities, and the working models affect only precision. Estimating the sampling rule costs precision through the prediction error and through the error of the residual variance model. The cost of the residual variance error is linear in that error in general and quadratic without truncation, which affects how large the initial sample should be. Withholding treatment assignment costs further precision. A pooled outcome model treats part of the conditional treatment effect as residual variance, which is a second-order loss near the null and mirrors the inflation of the pooled variance in blinded sample size re-estimation. A common sampling probability also cannot follow different residual variances in the two arms. This loss does not depend on the treatment effect and equals $(\sigma_1-\sigma_0)^2/\rho$ when the residual standard deviations are constant within arms.

In the simulations, adaptive sampling with one update was 12\% to 34\% more efficient than simple random sampling and saved 7\% to 26\% of the measurements, depending on the setting and the fraction measured. The losses relative to the oracle designs followed the terms of Theorems~\ref{thm:adaptive-bound} and \ref{thm:pooled-gap}. Learning the residual variance mattered most when the working models were adequate, prediction error mattered most under misspecification, and the common sampling probability mattered when the residual variances differed between arms. The best initial fraction became smaller as the trial became larger. In the antifungal trial, adaptive sampling reduced the sampling variance by 39\% and saved about one sixth of the measurements, because earlier visits predicted the later binary outcome well. The gains depend on how much the residual standard deviation varies across participants, which Proposition~\ref{prop:saving} makes explicit. The method should not be expected to improve precision in every setting. If the initial sample is very small or the residual variance is hard to predict, a constant probability can do as well or better, in line with the findings of \citet{SfyrakiWang2026}.

The design applies to many kinds of costly outcome, including laboratory assays, central reading of images, adjudication of clinical events, review of medical records, and detailed clinical scores. The auxiliary variables can include baseline measurements, routine follow-up data, or the output of a prediction algorithm. For a binary outcome, participants with predicted event probabilities near one half receive the largest sampling probabilities, so measurements are concentrated where the outcome is least certain. The protocol should list the variables that may enter the sampling rule, the probability bounds, the update times, and the planned fraction measured. The probability and model version applied to each participant should be retained (Algorithm~\ref{alg:design}). Outcomes that must be collected for safety or other clinical reasons can be measured with probability one. Missing outcomes that arise for reasons other than the planned sampling need separate handling and are not covered by the design.

Some limitations still remain. The design fixes the expected number of measured outcomes, while the realized number varies. With independent draws the realized number has standard deviation $\{\sum_i\pi_i(1-\pi_i)\}^{1/2}$, which was at most 20 in our simulations. A fixed number would require a different sampling scheme. Under unequal allocation, the pooled residual variance is not the quantity that minimizes the variance (Remark~\ref{rem:unequal}). The final estimator uses the pooled working prediction. Arm-specific predictions after unblinding may be more efficient but need separate theory (Remark~\ref{rem:arm-specific}). When a separate residual variance model was fitted to few binary outcomes, the standard errors were too small (Section~\ref{sec:toenail}).

The results concern independent participants and a difference in means. Survival outcomes, recurrent events, clustered trials, and multivariate outcomes would need sampling criteria tailored to their estimands. Central review of progression-free survival in a random sample of patients \citep{DoddEtAl2011} is a natural target for an adaptive, blinded version of this design. A further practical question is when to update the sampling rule and how many new outcomes to collect before refitting it. The variance bound is a starting point for that problem, which we do not pursue here.

\appendix
\section{Proofs}
\label{app:proofs}

\begin{proof}[Proof of Theorem~\ref{thm:unbiased}]
Let $\mathcal F_{i-1}=\sigma(\mathcal D_n,S_0,R_1,\ldots,R_{i-1})$. Since $\mathcal H_{i-1}\subseteq\mathcal F_{i-1}$, the functions $f_i$ and $\pi_i$ are $\mathcal F_{i-1}$-measurable, and by \eqref{eq:sampling}
\[
\E\left\{
f_i(X_i)+\frac{R_i}{\pi_i(X_i)}(Y_i-f_i(X_i))\given\mathcal F_{i-1}
\right\}
=f_i(X_i)+Y_i-f_i(X_i)=Y_i.
\]
By iterated expectation, $\E(U_i\mid\mathcal D_n,S_0)=Y_i$. This also holds for $i\in S_0$, where $U_i=Y_i$. Since $n_1$ and $n_0$ are fixed given $\mathcal D_n$, linearity within each arm gives the result. The argument is unchanged if $f_i$ and $\pi_i$ are updated after any number of earlier measurements, provided the current functions are fixed before the current draw.
\end{proof}

\begin{proof}[Proof of Theorem~\ref{thm:clt}]
Subtracting $Y_i$ from \eqref{eq:corrected} gives
\[
U_i-Y_i
=
\left(\frac{R_i}{\pi_i}-1\right)\{Y_i-f_i(X_i)\},
\]
and substitution into the two arm means yields \eqref{eq:decomp-estimator}. Write $T_n=\sqrt n(\Delta_n-\Delta)$, $\xi_i=h_{i,n}(R_i/\pi_i-1)\{Y_i-f_i(X_i)\}$, and $M_n=n^{-1/2}\sum_i\xi_i$. Because $h_{i,n}$ is $\mathcal D_n$-measurable,
\[
\E(\xi_i\mid\mathcal F_{i-1})=0,
\qquad
\E(\xi_i^2\mid\mathcal F_{i-1})=h_{i,n}^2\frac{1-\pi_i}{\pi_i}\{Y_i-f_i(X_i)\}^2.
\]
Thus $M_n$ is a martingale with respect to $\mathcal F_{n,i}=\mathcal F_i$, and its predictable quadratic variation is the left-hand side of \eqref{eq:predvar}. Since $\E\{(R_i/\pi_i-1)^4\mid\mathcal F_{i-1}\}\leq2\epsilon^{-3}$, the conditional Lyapunov sum satisfies
\[
n^{-2}\sum_i\E(\xi_i^4\mid\mathcal F_{i-1})\leq2\epsilon^{-3}\max_ih_{i,n}^4\;n^{-2}\sum_i\{Y_i-f_i(X_i)\}^4=O_p(n^{-1}),
\]
because $\max_ih_{i,n}^4=\max_a(n/n_a)^4=O_p(1)$ and the fourth-moment condition gives $n^{-1}\sum_i\{Y_i-f_i(X_i)\}^4=O_p(1)$. This implies the conditional Lindeberg condition.

Let $\varphi_n(t)=\E\{\exp(itM_n)\mid\mathcal D_n,S_0\}$. The conditional variance in \eqref{eq:predvar} and the conditional Lindeberg sum converge in probability. Every subsequence therefore has a further subsequence along which, for almost every realization of $(\mathcal D_n,S_0)$, they converge in conditional probability. Along it, the martingale central limit theorem \citep[Chapter~3]{HallHeyde1980}, applied under the conditional law and with a constant limit $V_R$ so that no nesting of $\sigma$-fields is needed, gives $\varphi_n(t)\to\exp(-t^2V_R/2)$. Hence $\varphi_n(t)\to\exp(-t^2V_R/2)$ in probability. Because $T_n$ is a function of $\mathcal D_n$,
\[
\E\{\exp(isT_n+itM_n)\}
=\E\{\exp(isT_n)\varphi_n(t)\}
\to\exp(-s^2V_Y/2)\exp(-t^2V_R/2)
\]
by bounded convergence. Thus $(T_n,M_n)$ converges to a pair of independent normal variables, and $\sqrt n(\widehat\Delta-\Delta)=T_n+M_n\overset{d}{\to}N(0,V_Y+V_R)$.

For the standard error, write $e_i=U_i-Y_i$ and $c_i=(1-\pi_i)\pi_i^{-1}\{Y_i-f_i(X_i)\}^2=\E(e_i^2\mid\mathcal F_{i-1})$. Within arm $a$,
\[
s_a^2=S_{Y,a}^2+\frac{1}{n_a}\sum_{A_i=a}c_i+D_a+O_p(n^{-1}),
\qquad
D_a=\frac{1}{n_a}\sum_{A_i=a}(e_i^2-c_i)+\frac{2}{n_a}\sum_{A_i=a}(Y_i-\bar Y_a)e_i-\bar e_a^{\,2},
\]
where $S_{Y,a}^2$ is the sample variance of $Y$ in arm $a$ and $\bar e_a$ is the mean of $e_i$ in arm $a$. The two sums in $D_a$ and $\bar e_a$ are averages of martingale differences with second moments bounded under the fourth-moment condition, so $D_a=o_p(1)$. Since $nS_{Y,a}^2/n_a\to\Var(Y\mid A=a)/p_a$ in probability and $h_{i,n}^2=(n/n_a)^2$ in arm $a$,
\[
n\,\widehat{\operatorname{se}}^2=\sum_a\frac{n}{n_a}s_a^2
=n\sum_a\frac{S_{Y,a}^2}{n_a}+\frac1n\sum_{i=1}^nh_{i,n}^2c_i+o_p(1)
\overset{p}{\longrightarrow}V_Y+V_R
\]
by \eqref{eq:predvar}.
\end{proof}

\begin{lemma}[Variance identity]
\label{lem:variance}
For fixed $f(X)$ and common $\pi(X)$,
\[
\Var(U\mid A=a,X)
=
\sigma_a^2(X)
+
\frac{1-\pi(X)}{\pi(X)}
\E\{(Y-f(X))^2\mid A=a,X\}.
\]
\end{lemma}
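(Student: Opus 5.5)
We prove the identity with the law of total variance, conditioning on $(A,X,Y)$ and treating the sampling indicator as the only extra source of randomness. Here $U=f(X)+R\{Y-f(X)\}/\pi(X)$ for a single participant under the fixed rule. The sampling draw satisfies $\Prb(R=1\mid A,X,Y)=\pi(X)$, which is the single-participant, fixed-rule version of \eqref{eq:sampling}. Because $f$ and $\pi$ are fixed functions of $X$, given $(A,X,Y)$ the only random quantity in $U$ is the Bernoulli$(\pi(X))$ indicator $R$.

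\textbf{Step 1 (conditional mean).} Given $(A,X,Y)$, compute
\[
\E(U\mid A,X,Y)=f(X)+\frac{\pi(X)}{\pi(X)}\{Y-f(X)\}=Y,
\]
exactly as in the proof of Theorem~\ref{thm:unbiased}.

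\textbf{Step 2 (conditional variance).} Using $\Var(R\mid A,X,Y)=\pi(X)\{1-\pi(X)\}$, compute
\[
\Var(U\mid A,X,Y)=\frac{\pi(X)\{1-\pi(X)\}}{\pi(X)^2}\{Y-f(X)\}^2=\frac{1-\pi(X)}{\pi(X)}\{Y-f(X)\}^2 .
\]

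\textbf{Step 3 (total variance given $(A,X)$).} Apply the law of total variance conditional on $(A=a,X)$:
\[
\Var(U\mid A=a,X)=\Var\{\E(U\mid A,X,Y)\mid A=a,X\}+\E\{\Var(U\mid A,X,Y)\mid A=a,X\}.
\]
The first term is $\Var(Y\mid A=a,X)=\sigma_a^2(X)$. In the second term, $\{1-\pi(X)\}/\pi(X)$ is a function of $X$ and comes out of the conditional expectation, which leaves $\E[\{Y-f(X)\}^2\mid A=a,X]$. Adding the two terms gives the stated identity.

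\textbf{Main obstacle.} The only point requiring care is that $R$ is conditionally independent of $(A,Y)$ given $X$ with probability $\pi(X)$, so that conditioning on $A$ does not alter the Bernoulli law. This is exactly the blinded sampling assumption in the fixed-rule setting, and it is the same property used in the proof of Theorem~\ref{thm:unbiased}. The rest of the argument is routine algebra. Integrating the identity over $X$ and combining the two arms with weights $1/p_a$ then yields \eqref{eq:variance} and the expression in Remark~\ref{rem:unequal}.
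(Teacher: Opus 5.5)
Your proposal is correct and follows essentially the same route as the paper: you compute $\E(U\mid A,X,Y)=Y$ and $\Var(U\mid A,X,Y)=\{1-\pi(X)\}\pi(X)^{-1}\{Y-f(X)\}^2$, then apply the law of total variance given $(A=a,X)$. Your added remark that $R$ is Bernoulli$(\pi(X))$ given $(A,X,Y)$ makes explicit the step the paper leaves implicit.
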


\begin{proof}
Conditional on $(A,X,Y)$, $\E(U\mid A,X,Y)=Y$ and
\[
\Var(U\mid A,X,Y)
=
\frac{1-\pi(X)}{\pi(X)}\{Y-f(X)\}^2.
\]
The result follows from the law of total variance.
\end{proof}

\begin{proof}[Derivation of \eqref{eq:variance} and Remark~\ref{rem:unequal}]
For general allocation, the asymptotic variance of the arm difference is $\Var(U\mid A=1)/p_1+\Var(U\mid A=0)/p_0$. By Lemma~\ref{lem:variance} and the law of total variance, $\Var(U\mid A=a)=\Var(Y\mid A=a)+\E[\{1-\pi(X)\}\pi(X)^{-1}\{Y-f(X)\}^2\mid A=a]$. The sampling part is therefore
\[
\sum_{a=0}^1\frac{1}{p_a}\E\left[\frac{1-\pi(X)}{\pi(X)}\{Y-f(X)\}^2\given A=a\right]
=\E\left[\frac{1-\pi(X)}{\pi(X)}\,\frac{\{Y-f(X)\}^2}{p_A^2}\right],
\]
which gives Remark~\ref{rem:unequal}. With $p_1=p_0=1/2$ the right-hand side is $4\E[\{1-\pi(X)\}\pi(X)^{-1}\{Y-f(X)\}^2]=4L_{r_f}(\pi)$, whether $X$ is measured at baseline or after randomization. This proves \eqref{eq:variance}.
\end{proof}

\begin{proof}[Proof of Proposition~\ref{prop:optimal}]
For fixed $r$, minimizing $L_r(\pi)$ is equivalent to minimizing $\E\{r(X)/\pi(X)\}$. Without truncation, the Cauchy--Schwarz inequality gives
\[
\E\left\{\frac{r(X)}{\pi(X)}\right\}\E\{\pi(X)\}
\geq
\left[\E\{\sqrt{r(X)}\}\right]^2,
\]
with equality when $\pi(X)$ is proportional to $\sqrt{r(X)}$, and the budget constraint determines the constant. With the lower and upper bounds, the Karush--Kuhn--Tucker conditions give the clipped solution \eqref{eq:optimal}. The map $c\mapsto\E\min[u,\max\{\epsilon,c\sqrt{r(X)}\}]$ is continuous and nondecreasing. It tends to $\epsilon$ as $c\to0$ and to $u$ as $c\to\infty$ when $\Prb\{r(X)>0\}=1$, so a suitable $c$ exists.
\end{proof}

\begin{proof}[Proof of Proposition~\ref{prop:saving}]
Write $K=K_f$ and $\kappa=\kappa_f$. Without truncation, Proposition~\ref{prop:optimal} gives $\min_{\pi\in\mathcal P_\rho}L_{r_f}(\pi)=K^2/\rho-\E r_f(X)$, and simple random sampling of a fraction $\rho_s$ gives $L_{r_f}(\rho_s)=\E r_f(X)(1/\rho_s-1)$. By \eqref{eq:variance} the two designs have the same variance if and only if $\E r_f(X)/\rho_s=K^2/\rho$. Since $\E r_f(X)=K^2+\Var\{\sqrt{r_f(X)}\}=K^2(1+\kappa^2)$, this is $\rho_s=\rho(1+\kappa^2)$, which proves (i). For (ii), $V_{\mathrm{full}}+4\{K^2/\rho-\E r_f(X)\}=v$ gives the stated value of $\rho$.
\end{proof}

\begin{proof}[Proof of Theorem~\ref{thm:adaptive-bound}]
(i) Write $\pi^*=\pi^*_{r_k}$. Since $L_r(\pi)=\E(r/\pi)-\E r$,
\[
L_{r_k}(\widehat\pi_k)-L_{r_k}(\pi^*)=\E\{(1/\widehat\pi_k-1/\pi^*)\,r_k\},
\qquad
L_{\widehat r_k}(\widehat\pi_k)-L_{\widehat r_k}(\pi^*)=\E\{(1/\widehat\pi_k-1/\pi^*)\,\widehat r_k\}\leq0,
\]
where the inequality holds because $\widehat\pi_k$ minimizes $L_{\widehat r_k}$ over $\mathcal P_\rho$. Subtracting the second identity from the first gives the middle expression in \eqref{eq:regret1}. The lower bound holds because $\pi^*$ minimizes $L_{r_k}$. Both probabilities lie in $[\epsilon,u]$, so $1/\widehat\pi_k$ and $1/\pi^*$ lie in $[1/u,1/\epsilon]$, and $|1/\widehat\pi_k-1/\pi^*|\leq1/\epsilon-1/u\leq C_\epsilon$. This gives the last inequality.

(ii) Write $g=\sqrt{\widehat r_k}$ and $h=\sqrt{r_k}$. Without truncation $\widehat\pi_k=\rho g/\E g$ and $\pi^*=\rho h/\E h$, so
\[
L_{r_k}(\widehat\pi_k)-L_{r_k}(\pi^*)
=\frac{1}{\rho}\left[\E g\;\E\left(\frac{h^2}{g}\right)-(\E h)^2\right].
\]
Put $\delta=h-g$. Then $\E(h^2/g)=\E g+2\E\delta+\E(\delta^2/g)$ and $(\E h)^2=(\E g)^2+2\E g\,\E\delta+(\E\delta)^2$, so the bracket equals $\E g\,\E(\delta^2/g)-(\E\delta)^2\leq\E g\,\E(\delta^2/g)$. This is \eqref{eq:regret-quad}, and the final statement uses $g\geq\sqrt{r_{\min}}$.

(iii) Because $f_P(X)=\E(Y\mid X)$ and the conditional law of $Y$ given $X$ among stage-$k$ participants does not depend on $\mathcal G_k$,
\begin{equation}
r_k(x)
=
r_P(x)+\{\widehat f_k(x)-f_P(x)\}^2.
\label{eq:pythagorean}
\end{equation}
Therefore $L_{r_k}(\pi)\geq L_{r_P}(\pi)$ for every admissible $\pi$, which gives the lower bound in \eqref{eq:regret2}, and
\begin{align*}
L_{r_k}(\pi_{r_k}^*)-L_{r_P}(\pi_{r_P}^*)
&\leq
L_{r_k}(\pi_{r_P}^*)-L_{r_P}(\pi_{r_P}^*)\\
&=\E\left[
\frac{1-\pi_{r_P}^*(X)}{\pi_{r_P}^*(X)}
\{\widehat f_k(X)-f_P(X)\}^2
\right]\\
&\leq C_\epsilon\|\widehat f_k-f_P\|_2^2.
\end{align*}
Adding this bound to the bound on $L_{r_k}(\widehat\pi_k)-L_{r_k}(\pi^*_{r_k})$ from (i) or (ii) gives \eqref{eq:regret2}. Multiplication by four follows from \eqref{eq:variance}.
\end{proof}

\begin{proof}[Proof of Corollary~\ref{cor:repeated}]
Apply \eqref{eq:regret2} conditionally at each stage, multiply by $4w_{k,n}$, and sum over $k$. An initial random sample with fraction $o(1)$ contributes $o_p(1)$ to the average first-order variance. Replacing population normalization by empirical normalization within a stage also contributes $o_p(1)$ under positivity and a law of large numbers for the auxiliary variables. This gives \eqref{eq:repeated-bound}. If its right-hand side converges to zero, the predictable quadratic variation in Theorem~\ref{thm:clt} converges to the pooled oracle value, and Slutsky's theorem gives the oracle first-order variance. For different stage fractions $\rho_k$, let $v(b)=\min_{\pi\in\mathcal P_b}L_{r_P}(\pi)$. If $\pi_1\in\mathcal P_{b_1}$ and $\pi_2\in\mathcal P_{b_2}$, then $\lambda\pi_1+(1-\lambda)\pi_2\in\mathcal P_{\lambda b_1+(1-\lambda)b_2}$. Since $L_{r_P}$ is convex in $\pi$, $v$ is convex and $\sum_kw_{k,n}v(\rho_k)\geq v(\rho)$.
\end{proof}

\begin{proof}[Proof of Corollary~\ref{cor:initial-size}]
Let $\alpha_n=m_n/n$ and $\rho_n=(\rho-\alpha_n)/(1-\alpha_n)$. Conditional on the initial random sample, the initial participants add no sampling variance because their outcomes are measured with certainty. The adaptive stage contributes
\[
4(1-\alpha_n)L_{r_1}(\widehat\pi_1;\rho_n),
\]
where the second argument indicates the sampling fraction. By Theorem~\ref{thm:adaptive-bound} and \eqref{eq:learning-rate}, this quantity differs from the corresponding pooled oracle at fraction $\rho_n$ by $O_p(m_n^{-\gamma})$.

It remains to compare the oracle that reserves a fraction $\alpha_n$ for the initial sample with the oracle that allocates the whole fraction $\rho$ at once. Without truncation, with $K=\E\sqrt{r_P(X)}$,
\[
\min_{\pi:\E\pi=b} L_{r_P}(\pi)
=
\frac{K^2}{b}-\E r_P(X).
\]
A direct calculation gives
\[
(1-\alpha_n)
\min_{\pi:\E\pi=\rho_n}L_{r_P}(\pi)
-
\min_{\pi:\E\pi=\rho}L_{r_P}(\pi)
=\alpha_n\left\{\E r_P(X)-K^2\frac{2\rho-1-\rho\alpha_n}{\rho(\rho-\alpha_n)}\right\}.
\]
Writing $\E r_P(X)=K^2+\Var\{\sqrt{r_P(X)}\}$, the right-hand side is $\alpha_n[K^2(1-\rho)^2/\rho^2+\Var\{\sqrt{r_P(X)}\}]+O(\alpha_n^2)$. Combining the two parts gives
\[
O_p(\alpha_n+m_n^{-\gamma})
=
O_p\left(\frac{m_n}{n}+m_n^{-\gamma}\right).
\]
Balancing $m_n/n$ and $m_n^{-\gamma}$ gives $m_n^{1+\gamma}\asymp n$, which proves the stated rate. The examples follow from \eqref{eq:regret1} and \eqref{eq:regret-quad}, using $|\sqrt a-\sqrt b|\leq|a-b|/\sqrt b$ for $a\geq0$ and $b>0$.
\end{proof}

\begin{proof}[Proof of \eqref{eq:decomposition}]
Condition on $X$ and apply the law of total variance with respect to treatment,
\begin{align*}
\Var(Y\mid X)
&=
\E\{\Var(Y\mid A,X)\mid X\}
+
\Var\{\E(Y\mid A,X)\mid X\}\\
&=
\eta\sigma_1^2+(1-\eta)\sigma_0^2
+
\eta(1-\eta)(q_1-q_0)^2.
\end{align*}
\end{proof}

\begin{proof}[Proof of Theorem~\ref{thm:pooled-gap}]
First allow the sampling probability to depend on treatment and use the arm-specific prediction $q_a$. The part of the variance that depends on sampling is
\[
2\sum_{a=0}^1
\E\left\{
\frac{1-\pi_a(X)}{\pi_a(X)}\sigma_a^2(X)
\right\},
\]
subject to $\E\{\pi_1(X)+\pi_0(X)\}/2=\rho$. Without truncation the minimizer is $\pi_a(X)=c\sigma_a(X)$ with
\[
c=\frac{2\rho}{\E\sigma_1(X)+\E\sigma_0(X)},
\]
and its sampling contribution is
\[
\frac{\{\E\sigma_1(X)+\E\sigma_0(X)\}^2}{\rho}
-2\E\{\sigma_1^2(X)+\sigma_0^2(X)\}.
\]

If one probability $\pi(X)$ is required while the arm-specific predictions are kept, the sampling contribution is $4L_s(\pi)$, with minimum
\[
\frac{4\{\E\sqrt{s(X)}\}^2}{\rho}
-4\E\{s(X)\}.
\]
Since $4\E\{s(X)\}=2\E\{\sigma_1^2(X)+\sigma_0^2(X)\}$, subtraction gives \eqref{eq:gsigma}. The pointwise inequality $\sigma_1(x)+\sigma_0(x)\leq2\sqrt{s(x)}$ implies $G_\sigma\geq0$, with equality if and only if $\sigma_1=\sigma_0$ almost surely. If $\sigma_1$ and $\sigma_0$ are constants, $4s-(\sigma_1+\sigma_0)^2=2\sigma_1^2+2\sigma_0^2-(\sigma_1+\sigma_0)^2=(\sigma_1-\sigma_0)^2$.

When treatment is also omitted from the outcome prediction, $r_P=s+d^2/4$ by \eqref{eq:decomposition}. For every admissible $\pi$, $L_{r_P}(\pi)\geq L_s(\pi)$, which proves $G_d\geq0$ and \eqref{eq:gap-decomp}. Without truncation, $\min_{\pi}L_r(\pi)=\{\E\sqrt{r(X)}\}^2/\rho-\E r(X)$ and $\E r_P(X)-\E s(X)=\E\{d^2(X)\}/4$, which gives the closed form in \eqref{eq:gd}. Evaluating the $r_P$ objective at $\pi_s^*$ gives
\[
G_d\leq4\{L_{r_P}(\pi_s^*)-L_s(\pi_s^*)\}
=\E\left[\frac{1-\pi_s^*(X)}{\pi_s^*(X)}\,d^2(X)\right]\leq C_\epsilon\|d\|_2^2,
\]
which is \eqref{eq:gd-bound}. For the first-order statement, let $a=\E\sqrt s$ and $b=\E(\sqrt{r_P}-\sqrt s)$. Since $\sqrt{r_P}-\sqrt s=(d^2/4)/(\sqrt{r_P}+\sqrt s)=d^2/(8\sqrt s)+O(d^4/s_{\min}^{3/2})$,
\[
G_d=\frac4\rho(2ab+b^2)-\E(d^2)
=\frac a\rho\E\left(\frac{d^2}{\sqrt s}\right)-\E(d^2)+O(\|d\|_\infty^4)
=\E\left[\frac{1-\pi_s^*(X)}{\pi_s^*(X)}d^2(X)\right]+O(\|d\|_\infty^4),
\]
because $1/\pi_s^*=a/(\rho\sqrt s)$. Under local alternatives with $\|d_n\|_2=O(n^{-1/2})$, \eqref{eq:gd-bound} gives $G_d=O(n^{-1})$.
\end{proof}

\section{Additional simulation results}

\begin{table}[H]
\centering
\caption{Sampling part of the variance of $\widehat\Delta$, multiplied by $n$, for one-update adaptive sampling and simple random sampling (adaptive / simple random), by initial fraction and $n$, with 1000 replications per cell and 40\% of outcomes measured in expectation. These are the values plotted in Figure~\ref{fig:initial}.}
\label{tab:grid}
\small
\begin{tabular}{lrccccc}
\toprule
 & & \multicolumn{5}{c}{Initial sample (\% of $n$)}\\
\cmidrule(l){3-7}
Setting & $n$ & 2.5 & 5 & 10 & 20 & 30\\
\midrule
1 (heteroskedastic) & 500 & 33.5 / 25.7 & 17.4 / 18.8 & 13.6 / 18.0 & 14.2 / 21.9 & 24.2 / 37.7\\
 & 2000 & 11.8 / 15.8 & 9.1 / 15.2 & 8.3 / 15.8 & 10.7 / 20.5 & 20.0 / 35.4\\
 & 8000 & 6.9 / 13.7 & 6.5 / 14.0 & 7.0 / 15.2 & 9.8 / 20.0 & 19.0 / 35.0\\
\addlinespace
2 (heterogeneous effects) & 500 & 32.3 / 24.6 & 20.7 / 19.8 & 17.6 / 18.9 & 20.1 / 23.1 & 33.6 / 39.0\\
 & 2000 & 14.9 / 16.1 & 13.2 / 15.5 & 13.2 / 16.5 & 16.8 / 21.1 & 29.7 / 36.7\\
 & 8000 & 11.1 / 14.1 & 10.9 / 14.4 & 11.7 / 15.7 & 15.7 / 20.6 & 28.3 / 36.0\\
\addlinespace
3 (misspecified) & 500 & 28.2 / 20.1 & 15.7 / 15.0 & 12.3 / 13.5 & 13.7 / 16.1 & 23.1 / 27.2\\
 & 2000 & 10.4 / 11.6 & 9.1 / 11.0 & 8.9 / 11.5 & 11.5 / 14.8 & 20.5 / 25.7\\
 & 8000 & 7.6 / 9.9 & 7.4 / 10.0 & 8.0 / 10.9 & 10.9 / 14.5 & 19.9 / 25.3\\
\bottomrule
\end{tabular}
\end{table}


\bibliographystyle{plainnat}
\bibliography{ref_final}

\end{document}